\def\RN{ \mathbb{R} }                               
\def\ZN{ \mathbb{Z} }                               
\newcommand{\EE}{{\mathbb E}}                       
\newcommand{\bth}{\boldsymbol{\theta}}              
\newcommand{\bdelta}{{\boldsymbol \delta}}          
\newcommand{\ba}{\boldsymbol{a}}                    
\newcommand{\bA}{\boldsymbol{A}}                    
\newcommand{\bB}{\boldsymbol{B}}                    
\newcommand{\bC}{\boldsymbol{C}}                    
\newcommand{\bD}{\boldsymbol{D}}                    
\newcommand{\bE}{\boldsymbol{E}}                    
\newcommand{\bzero}{\boldsymbol{0}}                 
\newcommand{\bb}{\boldsymbol{b}}                    
\newcommand{\bc}{\boldsymbol{c}}                    
\newcommand{\bd}{\boldsymbol{d}}                    
\newcommand{\be}{\boldsymbol{e}}                    
\newcommand{\bg}{\boldsymbol{g}}                    
\newcommand{\bI}{\boldsymbol{I}}                    
\newcommand{\bQ}{\boldsymbol{Q}}                    
\newcommand{\zero}{\boldsymbol{0}}
\newcommand{\HH}{{\mathcal H}}
\newcommand{\XX}{{\mathcal X}}
\newcommand{\AAA}{{\mathcal A}}
\newcommand{\pd}[2]{\frac{\partial #1}{\partial #2}}
\newcommand{\Ra}[1]{{{\mathcal R}\! \left( #1 \right) }}
\newcommand{\Nu}[1]{{{\mathcal N}\! \left( #1 \right) }}
\DeclareMathOperator{\sinc}{sinc}
\newcommand{\eg}{{\em e.g., }}
\newtheorem{theorem}{Theorem}
\newtheorem{proposition}{Proposition}
\begin{document}

\title{Xampling at the Rate of Innovation}

\author{Tomer Michaeli and Yonina C. Eldar,~\IEEEmembership{Senior~Member,~IEEE} %
\thanks{This work was accepted for publication in IEEE Transactions on Signal Processing.%
} 
\thanks{This work was supported in part by the Israel Science Foundation under
Grant no.~170/10 and by a Google Research Award.%
} %
\thanks{The authors are with the department of electrical engineering, Technion--Israel
Institute of Technology, Haifa, Israel 32000 (phone: +972-4-8294700, +972-4-8293256, fax: +972-4-8295757, e-mail: tomermic@tx.technion.ac.il, yonina@ee.technion.ac.il).%
}}
\maketitle
\begin{abstract}
We address the problem of recovering signals from samples taken at their rate of innovation. Our only assumption is that the sampling system is such that the parameters defining the signal can be stably determined from the samples, a condition that lies at the heart of every sampling theorem. Consequently, our analysis subsumes previously studied nonlinear acquisition devices and nonlinear signal classes. In particular, we do not restrict attention to memoryless nonlinear distortions or to union-of-subspace models. This allows treatment of various finite-rate-of-innovation (FRI) signals that were not previously studied, including, for example, continuous phase modulation transmissions. Our strategy relies on minimizing the error between the measured samples and those corresponding to our signal estimate. This least-squares (LS) objective is generally non-convex and might possess many local minima. Nevertheless, we prove that under the stability hypothesis, any optimization method designed to trap a stationary point of the LS criterion necessarily converges to the true solution. We demonstrate our approach in the context of recovering pulse streams in settings that were not previously treated. Furthermore, in situations for which other algorithms are applicable, we show that our method is often preferable in terms of noise robustness.
\end{abstract}

\begin{IEEEkeywords}
Finite rate of innovation, Xampling, nonlinear distortion, generalized sampling, iterative recovery.
\end{IEEEkeywords}

\section{Introduction}
\label{sec:introduction}

Sampling theory is concerned with recovery of continuous-time signals from their samples. Being an under-determined problem, sampling theorems often rely on the assumption that the signal to be recovered belongs to some predefined class of functions. The ``richness'' of this class dictates a minimal sampling rate required for perfect reconstruction. For example, the well known Shannon sampling theorem \cite{shannon1949} states that any signal $x(t)$ that is $\pi/T$-bandlimited can be perfectly recovered from its pointwise uniformly-spaced samples if the sampling interval does not exceed $T$. Similarly, if $x(t)$ is known to belong to the class of spline functions with knots at $t=nT$, $n\in\ZN$, then it can be recovered from pointwise uniform samples with interval $T$ \cite{unser1993}.

Until recently, much of the sampling literature treated linear acquisition devices and linear signal priors, that is, families of signals that form subspaces of $L_2$ \cite{eldar2009}. These include shift-invariant spaces \cite{aldroubi1994}, of which the bandlimited and spline priors are special cases, and their generalizations \cite{aldroubi1996}. Reconstruction in SI spaces from nonuniform pointwise samples was treated in \cite{aldroubi01}. Recovery from linear measurements in arbitrary subspaces was studied from an abstract Hilbert space viewpoint in \cite{eldar2003,eldar2006,michaeli10}. The appeal of subspace models and linear sampling stems from the fact that they result in linear recovery algorithms that are often easy to implement. However, many real-world signal classes do not conform to the subspace model and practical samplers often introduce nonlinear distortions \cite{dvorkind2008}.

One notable line of work deviating from these settings treats \emph{nonlinear sampling of linear models}. The first contributions in this direction can be attributed to \cite{landau61,sandberg63}, which studied reconstruction of bandlimited signals from companding (namely, applying a memoryless nonlinear distortion) and subsequent bandlimiting. These results were later extended to stochastic processes \cite{masry73} and to more general spaces \cite{sandberg94}. In \cite{dvorkind2008}, the authors generalized these developments to the setting in which the linear part of the acquisition device does not necessarily match the signal prior. A simpler iterative algorithm, consisting of linear time-invariant (LTI) filtering operations, was recently developed in \cite{faktor10} for the same setting.

Another, rather parallel, deviation from the widely studied linear setting treats \emph{linear sampling of nonlinear models}. Notable in this respect is the study initiated in \cite{vetterli2002} of sampling finite rate of innovation (FRI) signals. Theses signal classes correspond to families of functions defined by a finite number $\rho$ of parameters per time unit, a quantity referred to as their rate of innovation. Much of the recent attention attracted by this field emerges from the observation that several commonly-encountered FRI signals can be perfectly recovered from samples taken at their rate of innovation. Specifically, in \cite{vetterli2002}, it was demonstrated how periodic and finite-duration streams of Diracs, nonuniform splines and piecewise polynomials can be recovered from uniformly-spaced samples taken at the rate of innovation with either a sinc or a Gaussian kernel. Extensions to certain infinite-duration signals as well as more general classes of sampling kernels appeared in \cite{dragotti2007}, though at the cost of an increase in the sampling rate beyond the rate of innovation. A family of finite-duration sampling kernels was presented in \cite{tur2011} and demonstrated to substantially improve recovery stability. A robust multichannel sampling scheme was recently proposed in \cite{gedalyahu2011}. Finally, the authors of \cite{gedalyahu2010} studied sampling of a class of semi-periodic functions at the minimal possible rate, using a filter-bank of properly chosen filters.

All the works mentioned above for linear sampling of nonlinear models focused on signals that can be represented as weighted combinations of shifted pulses. These signal classes correspond to unions of subspaces \cite{lu2008}. Another important family within the union-of-subspace category is the set of multiband signals. As shown in \cite{mishali2009}, when using point-wise samples, the minimal sampling rate required for perfect recovery of these signals is twice their Landau rate, defined as twice the length of the support in the frequency domain. A low-rate multi-coset sampling method for multi-band signals was proposed in \cite{mishali2009}. A more practical multichannel sampling system was later developed \cite{mishali2010} and implemented on a board \cite{mishali2011}. An important feature of these systems is that the low-rate samples can be used directly to perform digital processing operations, without requiring reconstruction of the analog signal or its high-rate samples as an initial step. This is the key in the recently introduced Xampling paradigm for sampling signals that lie in a union of subspaces \cite{mishali2012,mishali2011b}.

Both lines of work treating nonlinear sampling of linear models and linear sampling of nonlinear models lack the full generality required for deployment in a wide range of practical systems. In particular, common to all nonlinear sampling works is the assumption that the nonlinearity is memoryless, such as in the Wiener-Hammerstein model treated in \cite{dvorkind2008}. However, this is not the case in many real-world applications. An exception is \cite{frank96}, which treats Volterra systems, but only focuses on bandlimited signals and point-wise samples. Similarly, all nonlinear models treated in the literature correspond to unions of subspaces, with the vast majority focusing on pulse streams. These do not include, for example, FRI signals such as continuous-phase modulation (CPM) transmissions. Furthermore, even within the restricted category of pulse streams, solutions are only available for a few special cases of signal structures and sampling devices. These solutions are very unstable in certain situations \cite{ben-haim2011}. An iterative algorithm for reconstructing signals lying in unions of subspaces from linear samples was proposed in \cite{blumensath2011}. The disadvantage of this technique, though, is that it requires, in each iteration, computing the orthogonal projection of the current signal estimate onto the set of all feasible signals. For most interesting signal models, this necessitates solving a non-convex optimization problem, which does not admit a closed form solution and for which there is no guarantee that standard optimization techniques will find its solution.

In this paper, we address the problem of reconstructing arbitrary FRI signals from possibly nonlinear measurements obtained at the rate of innovation. The only assumption we make on the sampling mechanism and signal prior is that the parameters defining the signal can be stably recovered from the samples. This assumption must be made by any practical sampling theorem that attempts to recover the signal parameters, whether explicitly or implicitly. Our approach is based on minimization of the error norm between the given set of samples and those of our signal estimate. Our main result is that under the stability assumption, this least-squares (LS) criterion possesses a unique stationary point. Consequently, any optimization algorithm designed to trap a stationary point, will necessarily converge to the true parameters. In particular, we show that the steepest-descent and quasi-Newton methods can be used to recover the signal parameters.

Our approach holds several important advantages. First, it is suited to a family of problems, which supersedes those treated by existing techniques. In particular, we do not assume that the sampling mechanism is linear or that the class of feasible signals forms a union of subspaces. Second, it provides a unified framework for recovering signals from samples taken at their rate of innovation. Thus, rather than tailoring a different algorithm for every possible combination of sampling method and signal prior, we can apply the same optimization technique to recover the signal parameters. Lastly, our method directly extracts the parameters defining the signal, which are the quantities of interest in most applications, thus conveniently allowing for further digital processing. For example, the parameters can correspond to transmitted symbols in a communication setting, reflector locations in ultrasound imaging \cite{tur2011}, and more. These properties all align with the Xampling methodology \cite{mishali2011b} and even broaden it to beyond the standard linear sampling and union-of-subspace settings.

It is important to note that our approach requires that all feasible signals can be stably recovered from the samples. Thus, even if a specific signal can theoretically be stably recovered, our method is not guaranteed to succeed when there exist other feasible signals which cannot be stably reconstructed. We demonstrate this limitation in the context of a concrete example in Section~\ref{sec:stability}.

The paper is organized as follows. In Section~\ref{sec:ProblemSetting} we describe the problem setting and assumptions. In Section~\ref{sec:SamplingRate} we derive a lower bound on the minimal sampling rate required for perfect recovery with a given sampling system. Next, in Section~\ref{sec:LSrecovery}, we describe and prove the validity of a general strategy for recovering signals from samples taken at the minimal rate. Two practical iterative methods are then studied in detail in Section~\ref{sec:IterativeRecovery}. Finally, we demonstrate our approach in the context of finite-duration and periodic pulse-stream recovery in Section~\ref{sec:ApplicationChannelSounding} and in the context of CPM receivers in Section~\ref{sec:ApplicationFSK}. We show that our method can cope with sampling systems beyond those previously studied. Furthermore, we demonstrate that in time-delay settings for which other algorithms are applicable, our method is often more robust to noise.

\section{Problem Setting}
\label{sec:ProblemSetting}

We denote scalars by lowercase letters, vectors by bold lowercase letters and matrices by bold uppercase letters (\eg $a\in\RN$, $\ba\in\RN^N$ and $\bA\in\RN^{M\times N}$). The adjoint of a linear operator $S$ is denoted $S^*$ and its null space and range space are written as $\Nu{S}$ and $\Ra{S}$ respectively. If $h$ is a function from some Hilbert space $\HH_1$ to another Hilbert space $\HH_2$, then its Fr\'echet derivative at $x_0$ is a continuous linear operator $(\partial{h}/\partial{x})|_{x_0} :\HH_1\rightarrow\HH_2$ such that
\begin{equation}
\lim_{\bdelta \rightarrow 0} \frac{ \left\| h(x_0 + \delta) - h(x_0) - \left.\pd{h}{x}\right|_{x_0} \delta \right\|_{\HH_2} }{ \|\delta\|_{\HH_1} } = 0,
\end{equation}
where the limit is with respect to the norm defined on $\HH_1$.

\subsection{Signal Model}

The signal classes we treat are those that are determined by a finite number of parameters per time unit. The $\tau$-local rate of innovation of a signal $x(t)$, denoted $\rho_\tau$, is the minimal number of parameters defining any length-$\tau$ segment of $x(t)$, divided by $\tau$. An FRI signal is one for which $\rho_\tau$ is finite, at least for large enough $\tau$.

Perhaps the simplest class of FRI signals corresponds to functions that can be expressed as
\begin{equation}\label{eq:SI}
x(t) = \sum_{m\in\ZN}a_m g(t-mT)
\end{equation}
with some arbitrary sequence $\{a_m\}\in\ell_2$, where $g(t)$ is a given pulse in $L_2$ and $T>0$ is a given scalar. This set of signals is a linear subspace of $L_2$, which is often termed a \emph{shift-invariant} (SI) space \cite{aldroubi1994}. The subspace of $\pi/T$-bandlimited signals is a special case of \eqref{eq:SI}, with $g(t)=\sinc(t/T)$. Similarly, \eqref{eq:SI} can represent the space of spline functions (by letting $g(t)$ be a B-spline function) and communication signals such as pulse-amplitude modulation (PAM) and quadrature amplitude modulation (QAM). If the support of $g(t)$ is contained in $[t_a,t_b]$, then any interval of the form $[t,t+\tau]$, where $\tau>0$, is affected by no more than $\lceil(t_b-t_a+\tau)/T\rceil$ coefficients from the sequence $\{a_m\}$. This is demonstrated in Fig.~\ref{fig:FRI}\subref{fig:SI_FRI}. Thus, the $\tau$-local rate of innovation of signals of the form \eqref{eq:SI} is
\begin{equation}\label{eq:rhoSI}
\rho_{\tau}=\frac{1}{\tau}\left\lceil\frac{t_b-t_a+\tau}{T}\right\rceil.
\end{equation}
The asymptotic rate of innovation in this case, which can be found by taking $\tau$ to infinity, is $1/T$. We note that, according to our definition, if $g(t)$ is not compactly supported then the rate of innovation is infinite. Thus, for example, bandlimited signals (which correspond to $g(t)=\sinc(t/T)$) are not considered FRI in this paper.

\begin{figure}[t]
\centering
\subfloat[]{\includegraphics[scale=0.85]{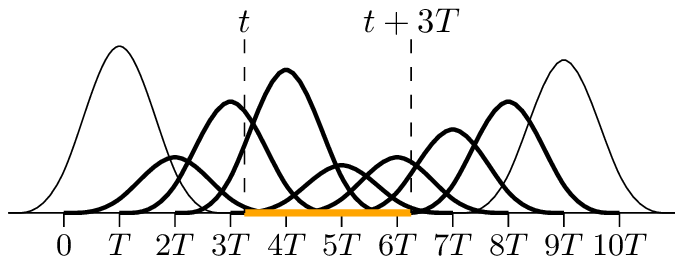}\label{fig:SI_FRI}}\\
\subfloat[]{\includegraphics[scale=0.85]{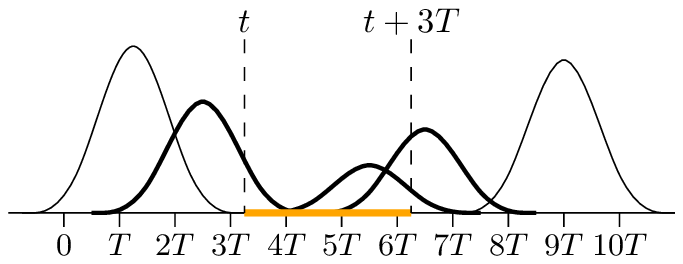}\label{fig:TD_FRI}}
\caption{Streams of shifted versions of a pulse $g(t)$, supported on $[-2T,2T]$. Bold pulses are those that affect the observation segment $[t,t+3T]$. (a)~Fixed pulse positions \eqref{eq:SI}, spaced $T$ seconds apart. Here, the segment $[t,t+3T]$ is affected by $7$ pulses so that $\rho_{3T}=7/(3T)$. (b)~Unknown pulse positions \eqref{eq:ChSounding} with minimal separation $T$. Here, the rate of innovation is $\rho_{3T}=2\times7/(3T)=14/(3T)$. Note that the specific segment $[t,t+3T]$ is affected only by $3$ pulses so that there are $(2\times3)/(3T)=2/T$ parameters per time unit at that location.}
\label{fig:FRI}
\end{figure}

A more complicated model results when the location of the pulses are unknown a-priori, as often happens in channel sounding scenarios. In these cases,
\begin{equation}\label{eq:ChSounding}
x(t) = \sum_{m\in\ZN}a_mg(t-t_m),
\end{equation}
where both $\{a_m\}$ and $\{t_m\}$ are unknown parameters. This class of signals is not a linear subspace, and is therefore much harder to handle. If we fix the time-delays $\{t_m\}$ and vary only the amplitudes $\{a_m\}$ then we get a subspace. But different choices of time-delays result in different subspaces so that overall \eqref{eq:ChSounding} corresponds to a \emph{union of subspaces}. Assuming that the minimal separation between any two time delays is $T$, this model is determined by (at most) twice the number of parameters defining \eqref{eq:SI} per time unit, as demonstrated in Fig.~\ref{fig:FRI}\subref{fig:TD_FRI}. Therefore, the associated $\tau$-local rate of innovation is twice $\rho_{\tau}$ of \eqref{eq:rhoSI} and the asymptotic rate is $2/T$.

The model \eqref{eq:ChSounding} and several of its variants have received the largest amount of attention in the FRI literature\footnote{In fact, the original definition of FRI signals, given in \cite{vetterli2002}, was limited only to functions of the form \eqref{eq:ChSounding}.}. However, other interesting FRI signal classes exist. As an example, suppose that $L$ transmissions of the form \eqref{eq:SI} are modulated, each with a different carrier frequency, to yield
\begin{equation}\label{eq:Multiband}
x(t) = \sum_{\ell=1}^L\sum_{m\in\ZN}a_{\ell,m} g(t-mT)\sin(\omega_\ell t).
\end{equation}
Here, $\{a_{\ell,m}\}_{m\in\ZN}$ is the data transmitted by the $\ell$th user on the carrier frequency $\omega_\ell$. This model generalizes the family of multiband signals treated in \cite{mishali2009,mishali2010}, which corresponds to the case in which $g(t)=\sinc(t/T)$. It is readily seen that if ${\rm supp}\{g\}\in[t_a,t_b]$ then any segment $[t,t+\tau]$ of $x(t)$ is affected by at most $L\lceil(t_b-t_a+\tau)/T\rceil$ of the coefficients $\{a_{\ell,m}\}$. With the addition of the $L$ unknown frequencies, we find that the $\tau$-local rate of innovation of signals of the form \eqref{eq:Multiband} is
\begin{equation}\label{eq:rhoMB}
\rho_{\tau}=\frac{L}{\tau}\left(1+\left\lceil\frac{t_b-t_a+\tau}{T}\right\rceil\right).
\end{equation}
Note that the asymptotic rate of innovation, which is given by $L/T$ in this setting, is not affected by the fact that we do not know the $L$ carrier frequencies. This is because as we increase the observation period, their effect becomes negligible. The set of signals of the form \eqref{eq:rhoMB} is a union of subspaces, where the frequencies $\{\omega_\ell\}$ determine the subspace and the amplitudes $\{a_{\ell,m}\}$ determine the position within the subspace.

To the best of our knowledge, only union-of-subspace settings were treated within the FRI literature. However, FRI signals do not have to conform to the union-of-subspace model. An example is continuous-phase modulation (CPM) transmissions. These include continuous phase frequency shift keying (CPFSK) and minimum shift keying (MSK), tamed frequency modulation (TFM), Gaussian MSK (GMSK) and more. Here, the transmitted signal takes on the form
\begin{equation}\label{eq:CPM}
x(t)=\cos\left(\omega_0 t + 2\pi h\int_{-\infty}^t\sum_{m\in\ZN}a_m g(\tau-mT)d\tau\right),
\end{equation}
where $\omega_0$ is a fixed carrier frequency, $a_m\in\{\pm1,\pm3,\ldots,\pm(Q-1)\}$ are the message symbols, $h$ is the modulation index (usually a rational number), and $g(t)$ is a pulse shape that is supported on $[0,LT]$ for some integer $L>0$ and satisfies $\int_0^{LT} g(t)dt = 0.5$. The rate of innovation of CPM signals can be determined by expressing \eqref{eq:CPM} as
\begin{equation}\label{eq:FSK}
x(t)=\cos\left(\omega_0 t + \sum_{m\in\ZN}\tilde{a}_m \tilde{g}(t-mT)\right),
\end{equation}
where
\begin{equation}
\tilde{a}_m = \sum_{n=-\infty}^m a_n
\end{equation}
and
\begin{equation}
\tilde{g}(t) = 2\pi h \int_{-\infty}^t\left(g(\tau)-g(\tau-T)\right)d\tau.
\end{equation}
Since knowing $\{a_m\}$ is equivalent to knowing $\{\tilde{a}_m\}$ (up to initial boundary condition) and $\tilde{g}(t)$ is supported on $[0,(L+1)T]$, the number of coefficients affecting $x(t)$ on any interval $[t,t+\tau]$ is the same as in \eqref{eq:SI} with $t_a=0$ and $t_b=(L+1)T$. Consequently, the rate of innovation of CPM signals is
\begin{equation}
\rho_{\tau}=\frac{1}{\tau}\left(\left\lceil\frac{\tau}{T}\right\rceil+L+1\right)
\end{equation}
and their asymptotic rate is $1/T$.

Finally, we note that there are union-of-subspace models that do not correspond to FRI signals. As an example, consider the set of signals
\begin{equation}\label{eq:UoSnonFRI}
x(t) = \sum_{m\in\ZN}a_m g_m(t),
\end{equation}
where the only knowledge we have about the pulses $\{g_m(t)\}$ is that they decay exponentially as $t\rightarrow\pm\infty$. Clearly, every possible choice of pulse shapes corresponds to a subspace. Nevertheless, for each $m$, the number of parameters required for describing $g_m(t)$ is infinite.

Any arbitrary length-$\tau$ segment of an FRI signal is determined by at most $K=\lceil\tau\rho_\tau\rceil$ parameters. Therefore, it is reasonable to expect that a properly designed set of $K$ measurements should suffice to identify the parameters of the segment. As discussed in the introduction, this is often the case, implying that many FRI signals can be perfectly recovered from samples taken at their rate of innovation.

Without loss of generality, we focus in this paper on the recovery of an arbitrary segment from an FRI signal. From an abstract viewpoint, any such segment is a vector in some Hilbert space $\HH$, which is known to lie within the set
\begin{equation}\label{eq:setX}
\XX = \left\{x=h(\bth):\bth\in\AAA\right\},
\end{equation}
where $\AAA$ is an open set in $\RN^K$ and $h:\AAA\rightarrow\HH$ is some given function. For example, for any integer $M>0$, the segment $[T+t_b,MT+t_a]$ from \eqref{eq:SI} is affected only by the pulses with indices $m=1,\ldots,M$. Consequently, this segment corresponds to the parameter vector $\bth=\begin{pmatrix}a_1 & \cdots & a_M\end{pmatrix}^T$ and to the function $h:\RN^{M}\rightarrow L_2([T+t_b,MT+t_a])$ given by
\begin{equation}
h : \begin{pmatrix}a_1 & \cdots & a_M\end{pmatrix}^T \mapsto \sum_{m=1}^M a_m g(t-mT).
\end{equation}
Note that, since the signal prior corresponds to a subspace in this case, the function $h$ is linear. In the channel sounding model \eqref{eq:ChSounding}, however, this is no longer true. Specifically, with a minimal separation of $T$ seconds between any two of the time delays $\{t_m\}$, the segment $[T+t_b,MT+t_a]$ from \eqref{eq:ChSounding} is affected by no more than $M$ pulses. Indexing these pulses as $m=1,\ldots,M$, this setting corresponds to the $2M$-dimensional parameter vector $\bth=\begin{pmatrix}t_1 & \cdots & t_M & a_1 & \cdots & a_M\end{pmatrix}^T$ and to the nonlinear function $h:\RN^{2M}\rightarrow L_2([T+t_b,MT+t_a])$ given by
\begin{equation}\label{eq:h_TD}
h : \begin{pmatrix}t_1 & \cdots & t_M & a_1 & \cdots & a_M\end{pmatrix}^T \mapsto \sum_{m=1}^M a_m g(t-t_m).
\end{equation}

We will assume in the sequel that $h$ is Fr\'echet differentiable with respect to the parameter vector $\bth$. This demand is not very restrictive and is satisfied in most practical scenarios. In particular, if the pulse shape $g(t)$ is in $L_2$, then the models \eqref{eq:SI}, \eqref{eq:Multiband} and \eqref{eq:FSK} are all Fr\'echet differentiable with respect to their parameters on any finite-duration interval. If, in addition, $g(t)$ is differentiable and its derivative $g'(t)$ is in $L_2$, then the model \eqref{eq:ChSounding} is also Fr\'echet differentiable. For example, the Fr\'echet derivative of $h$ of \eqref{eq:h_TD} at $\bth_0=\begin{pmatrix}t_1 & \cdots & t_M & a_1 & \cdots & a_M\end{pmatrix}^T$ is the linear operator $(\partial{h}/\partial{\bth})|_{\bth_0} :\RN^{2M}\rightarrow L_2([T+t_b,MT+t_a])$ defined by\footnote{Fr\'echet differentiability is guaranteed in this setting by the fact that the Gateaux (namely directional) derivative of $h$ at $\bth_0$ in the direction $\Delta_{\bth}$ is a bounded linear function of $\Delta_{\bth}$.}
\begin{align}
(\partial{h}/\partial{\bth})|_{\bth_0}\bb &= -a_1g'(t-t_1)b_1 - \cdots -a_Mg'(t-t_M)b_M \nonumber\\
&+ g(t-t_1)b_{M+1} + \cdots + g(t-t_M)b_{2M}.
\end{align}

In addition to the recovery of $x$, it is often of interest to identify the parameters $\bth$ defining it. This goal, of course, cannot be achieved if the parametrization of the set $\XX$ is redundant in the sense that there exist parameters $\bth_1\neq\bth_2$ such that $h(\bth_1)=h(\bth_2)$. To be able to recover $\bth$ in a stable manner, we require the slightly stronger condition that
\begin{equation}\label{eq:h_stability}
\alpha_h\|\bth_1-\bth_2\|_{\RN^K} \leq \|h(\bth_1)-h(\bth_2)\|_{\HH} 
\end{equation}
for some constant $\alpha_h>0$ and for all $\bth_1,\bth_2\in\AAA$. As we discuss in Section~\ref{sec:implicationToUOS} below, some of the aforementioned signal models do not comply with this requirement unless the feasible set $\AAA$ is chosen appropriately.

No further assumptions on the structure of $\XX$, beyond \eqref{eq:h_stability}, are needed for our analysis. Nevertheless, a few remarks are in place regarding the implication of this condition in the widely studied union-of-subspace setting.

\subsection{Implication to Union-of-Subspace Models}
\label{sec:implicationToUOS}

Suppose that $\bth$ can be partitioned as\footnote{The superscripts `N' and `L' stand for nonlinear and linear respectively, intending as a reminder that $h$ is linear in $\bth^{\rm L}$ and nonlinear in $\bth^{\rm N}$.} $\bth=\begin{pmatrix}\bth^{\rm N} & \bth^{\rm L}\end{pmatrix}$,
where the parameters $\bth^{\rm N}$ determine a subspace $\AAA_{\bth^{\rm N}}$ in $\HH$ and the parameters $\bth^{\rm L}$ determine a vector within $\AAA_{\bth^{\rm N}}$. This setting includes as special cases \eqref{eq:ChSounding}, in which $\bth^{\rm N}$ comprises the time shifts $\{t_\ell\}$ and $\bth^{\rm L}$ the amplitudes $\{a_\ell\}$, and \eqref{eq:Multiband}, in which $\bth^{\rm N}$ comprises the frequencies $\{\omega_\ell\}$ and $\bth^{\rm L}$ the sequences $\{a_{\ell,m}\}$.

In this situation, condition \eqref{eq:h_stability} implies that $\bth^{\rm L}$ \emph{must be bounded away from zero} for every signal $x\in\XX$. Indeed, otherwise we could choose $\bth^{\rm L}_1=\bth^{\rm L}_2=\zero$ and $\bth^{\rm N}_1\neq\bth^{\rm N}_2$ so that $h(\bth_1)=h(\bth_2)=0$ despite the fact that $\bth_1\neq\bth_2$.


Condition \eqref{eq:h_stability} also imposes limitations on the parameters $\bth^{\rm N}$. Specifically, assume that the parametrization is such that the subspace $\AAA_{\bth^{\rm N}}$ is not affected by permutation of the elements of $\bth^{\rm N}$. This is the case, for instance, in the channel sounding application \eqref{eq:ChSounding} and in the multiband setting \eqref{eq:Multiband} where $\bth^{\rm N}$ comprises the time delays $\{t_\ell\}$ and frequencies $\{\omega_\ell\}$, respectively. This permutation-invariance implies that if two elements of the vector $\bth^{\rm N}$ are equal, then there exist multiple choices for the parameters $\bth^{\rm L}$ yielding the same signal. Therefore, condition \eqref{eq:h_stability} is clearly violated in this case. We thus conclude that in a permutation-invariant parametrization, the elements of $\bth^{\rm N}$ \emph{must be bounded away from each other}.

Finally, condition \eqref{eq:h_stability} imposes restrictions on the maximal possible distance $\|\bth^{\rm N}_1-\bth^{\rm N}_2\|$ for any two vectors $\bth_1,\bth_2\in\AAA$. More concretely, suppose that the function $h(\bth)$ is such that $\|h(\bth_1)-h(\bth_2)\|$ cannot be made arbitrarily large by varying only the sub-vectors $\bth^{\rm N}_1$ and $\bth^{\rm N}_2$ of $\bth_1$ and $\bth_2$. This always happens, for example, in the channel sounding setting \eqref{eq:ChSounding} with a finitely-supported pulse $g(t)$ because the pulses $g(t-t_1)$ and $g(t-t_2)$ cease to overlap when the distance $|t_2-t_1|$ exceeds the pulse's width. In this setting, condition \eqref{eq:h_stability} cannot be satisfied unless the distance $\|\bth^{\rm N}_1-\bth^{\rm N}_2\|$ is bounded. In other words, $\bth^{\rm N}$ \emph{must be restricted to a bounded set}. Therefore, in model \eqref{eq:ChSounding}, for instance, the time delays $\{t_m\}$ must all lie in some bounded interval. Perhaps a more appealing alternative is to require that $t_1$ lie in some bounded interval and that there exist an upper bound on the separation between any two consecutive time-delays.

To conclude, in the union-of-subspace setting the feasible set $\AAA$ must be such that elements of $\bth^{\rm L}$ are bounded away from zero, the vector $\bth^{\rm N}$ is restricted to a bounded set in $\RN^K$ and its elements are sufficiently separated. This can be achieved in the model \eqref{eq:ChSounding}, for example, by requiring that
\begin{align}\label{eq:ConstrTD}
a_m>a_0, \quad T_{\min} < t_m - t_{m-1} < T_{\max},
\end{align}
for every $m=1\ldots,M$, where $a_0>0$ is a lower-bound on the amplitude, $0<T_{\min}<T_{\max}<\infty$ constitute a lower- and an upper-bound on the separation between consecutive time-delays and $t_0$ is an arbitrary constant.

\subsection{Sampling Method}

Our goal is to recover $x$ by observing $N$ generalized samples $\bc=(c_1,\ldots,c_N)^T$ obtained as
\begin{equation}\label{eq:cSx}
\bc = S(x),
\end{equation}
where $S:\HH\rightarrow \RN^N$ is some (possibly nonlinear) Fr\'echet differentiable operator. This representation is more general than the widely used linear setting, in which
\begin{equation}\label{eq:linSamp}
c_n = \left\langle x, s_n \right\rangle, \quad n=1,\ldots,N,
\end{equation}
for some set of vectors $\{s_n\}_{n=1}^N$ in $\HH$. In particular, \eqref{eq:cSx} may account for nonlinear distortions introduced by the sampling device. For example, $S$ can represent the samples
\begin{equation}\label{eq:NonlinSamp1}
c_n = f(\langle x, s_n \rangle), \quad n=1,\ldots,N,
\end{equation}
where $f(\cdot)$ is a nonlinear sensor response.

We say that a sampling operator $S$ is \emph{stable with respect to} $\XX$ if there exist constants $0<\alpha_s\leq\beta_s<\infty$ such that
\begin{equation}\label{eq:s_stability}
\alpha_s\|x_2-x_1\|_{\HH} \leq \|S(x_1) - S(x_2)\|_{\RN^N} \leq \beta_s\|x_2-x_1\|_{\HH}
\end{equation}
for all $x_1,x_2\in\XX$. This definition is the same as that used in \cite{lu2008} apart from the fact that here the set $\XX$ is not necessarily a union of subspaces and the operator $S$ is not necessarily linear. The left-hand inequality ensures that if two signals $x_1$ and $x_2$ are sufficiently different from one another, then their samples $S(x_1)$ and $S(x_2)$ are different as well. In particular, it implies that two different signals $x_1,x_2\in\XX$ cannot produce the same set of samples, so that there is a unique recovery $x\in\XX$ associated with every valid set of samples $\bc=S(x)\in\RN^N$.

Conditions \eqref{eq:s_stability} and \eqref{eq:h_stability} lie at the heart of any practical sampling theorem, whether implicitly or not. It is out of the scope of this paper to survey the situations in which these conditions are satisfied, as this is rather problem-specific. The interested reader may refer to \cite{dvorkind2008} for an analysis of the SI model \eqref{eq:SI} with nonlinear samples \eqref{eq:NonlinSamp1}, to \cite{blumensath2011} for linear sampling of several union-of-subspace models and for \cite{sun2008} for a general theory for the stability of FRI models. In the sequel we show that these two conditions dictate a minimal sampling rate below which perfect recovery cannot be guaranteed. More interestingly, we will also show that when \eqref{eq:s_stability} and \eqref{eq:h_stability} hold, perfect recovery can be attained at this minimal sampling rate by using a wide family of iterative algorithms.

\section{Minimal Sampling Rate}
\label{sec:SamplingRate}

To be able to devise a general reconstruction strategy for signals in $\XX$ that were sampled by $S$, we first determine the minimal number of samples $N$ required for perfect recovery. Interestingly, conditions \eqref{eq:s_stability} and \eqref{eq:h_stability} implicitly impose a limitation on $N$.
\begin{proposition}\label{lem:NgeqK}
Suppose that the function $h:\AAA\rightarrow \HH$ satisfies \eqref{eq:h_stability} and that the operator $S:\HH\rightarrow\RN^N$ satisfies \eqref{eq:s_stability}. Then
\begin{equation}
N\geq K+\max_{x_1\in\XX}\dim\left(\Nu{\left(\left.\pd{S}{x}\right|_{x_1}\right)^*}\right).
\end{equation}
\end{proposition}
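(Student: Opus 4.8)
The plan is to reduce the asserted inequality to a lower bound on the rank of the Fr\'echet derivative of $S$ at each point of $\XX$, and to obtain that bound by linearizing the composite map $S\circ h$. First I would chain the two stability hypotheses: for any $\bth_1,\bth_2\in\AAA$, setting $x_i=h(\bth_i)$ and applying the left inequality of \eqref{eq:s_stability} followed by \eqref{eq:h_stability} yields
\begin{equation}
\|S(h(\bth_1))-S(h(\bth_2))\|_{\RN^N} \geq \alpha_s\|h(\bth_1)-h(\bth_2)\|_{\HH} \geq \alpha_s\alpha_h\|\bth_1-\bth_2\|_{\RN^K}.
\end{equation}
Thus the composite $F=S\circ h:\AAA\rightarrow\RN^N$ is bounded below by $\alpha_s\alpha_h>0$. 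In particular $h$ is injective, so $\XX$ and $\AAA$ are in bijection through $x_1=h(\bth_1)$, and the maximum over $x_1\in\XX$ may equivalently be taken over $\bth_1\in\AAA$.

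Next I would convert this lower bound into injectivity of the derivative of $F$. Since $\AAA$ is open, for any $\bth_1\in\AAA$ and any unit vector $\bb\in\RN^K$ the points $\bth_1+t\bb$ remain in $\AAA$ for small $t$; dividing $\|F(\bth_1+t\bb)-F(\bth_1)\|\geq\alpha_s\alpha_h|t|$ by $|t|$ and letting $t\rightarrow0$ gives $\|(\partial F/\partial\bth)|_{\bth_1}\bb\|\geq\alpha_s\alpha_h$ directly from the definition of the Fr\'echet derivative. Hence the linear map $(\partial F/\partial\bth)|_{\bth_1}:\RN^K\rightarrow\RN^N$ is injective and therefore has rank exactly $K$.

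The crux is then a rank count. By the chain rule for Fr\'echet derivatives this map factors as
\begin{equation}
\left.\pd{F}{\bth}\right|_{\bth_1} = \left.\pd{S}{x}\right|_{x_1}\left.\pd{h}{\bth}\right|_{\bth_1}, \qquad x_1=h(\bth_1),
\end{equation}
and since the rank of a composition cannot exceed the rank of its outer factor, $K\leq\mathrm{rank}\left(\left.\pd{S}{x}\right|_{x_1}\right)$. Writing $A_1=(\partial S/\partial x)|_{x_1}:\HH\rightarrow\RN^N$ and using the orthogonal decomposition $\RN^N=\Ra{A_1}\oplus\Nu{A_1^*}$, this gives $\dim\Nu{A_1^*}=N-\mathrm{rank}(A_1)\leq N-K$. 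As this holds for \emph{every} $x_1\in\XX$, taking the maximum over $x_1$ yields $N\geq K+\max_{x_1\in\XX}\dim\Nu{A_1^*}$, which is the claim.

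The individual steps are standard, and the point demanding the most care is the passage from the finite-difference bound to the derivative bound, which relies on openness of $\AAA$ and on the Fr\'echet definition to take the directional limit. One should also note that although $\HH$ may be infinite dimensional, $A_1$ has finite rank (at most $N$), so that the rank inequality and the decomposition $\RN^N=\Ra{A_1}\oplus\Nu{A_1^*}$ are legitimate; this is what keeps the argument elementary despite the abstract Hilbert-space setting.
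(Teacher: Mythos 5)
Your proof is correct and follows essentially the same route as the paper's: both establish injectivity of the derivative of the composite $S\circ h$ via the chained stability bound $\|S(h(\bth_1))-S(h(\bth_2))\|\geq\alpha_s\alpha_h\|\bth_1-\bth_2\|$ and the Fr\'echet limit, then apply the chain rule and the decomposition $\RN^N=\Ra{A_1}\oplus\Nu{A_1^*}$ to count dimensions. The only cosmetic difference is that you bound the directional derivative below directly, whereas the paper argues by contradiction with a null-space vector; the substance is identical.
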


Before providing a proof, we note that Proposition~\ref{lem:NgeqK} shows that the minimal number of samples $N$ required for perfect recovery is the number of parameters $K$ defining $x$. In other words, stable recovery is impossible when sampling below the rate of innovation. While very intuitive and stated in every FRI sampling paper, we believe that this result was not formally proved before for the general signal model and acquisition mechanism discussed in this paper.

Proposition~\ref{lem:NgeqK} further shows that sampling at the rate of innovation is insufficient if the null space of $(\partial{S}/\partial{x})^*$ is nonempty at some $x\in\XX$. When $S$ is a linear operator and $\XX$ is a subspace, spanned by vectors $\{x_k\}_{k=1}^K$, this condition implies that the vectors $\{Sx_k\}_{k=1}^K$ should be linearly independent. In other words, the $N\times K$ matrix whose $(n,k)$ entry is $\langle s_n,x_k\rangle$, should have an empty nullspace, where $\{s_n\}_{n=1}^N$ are the sampling vectors of \eqref{eq:linSamp}. If $S$ is linear but $\XX$ is not contained in any finite-dimensional subspace, then sampling at the rate of innovation necessitates that the sampling vectors $\{s_n\}_{n=1}^N$ be linearly independent. Indeed, if $\{s_n\}_{n=1}^N$ are linearly dependent, then there exists an index $j$ such that $s_j=\sum_{n\neq j}a_ns_n$ for some coefficients $\{a_n\}_{n\neq j}$. Consequently, the sample $c_j$ can be expressed in terms of the other samples as $c_j=\langle x,s_j\rangle = \sum_{n\neq j}\bar{a}_n\langle x,s_n\rangle = \sum_{n\neq j}\bar{a}_nc_n$ and thus can be disregarded.

As another example, suppose that one of the measurements produced by the sensing device, say $c_1$, is the energy $0.5\|x\|^2$ of $x$. In this case $(\partial{c_1}/\partial{x})|_{x_1}=x_1$. Consequently, from Proposition~\ref{lem:NgeqK}, sampling at the minimal rate is impossible if the set of signals $\XX$ contains the signal $x_1=0$. The intuition here follows from the observation that small perturbations in $x$ around the signal $x_1=0$ do not show in $c_1$. Therefore, if the input to our sampling device happens to be $x=0$ in this setting, then sampling is unavoidably unstable, as the left-hand side of condition \eqref{eq:s_stability} is violated.

\begin{proof}
Since $h(\bth)$ and $S(x)$ are Fr\'echet differentiable, it follows that the function $\hat{\bc}(\bth)=S(h(\bth))$ is Fr\'echet differentiable as well. We will start by showing that its derivative $\partial{\hat{\bc}}/\partial{\bth}$, which is an $N\times K$ matrix, has an empty null space.

By definition, the Fr\'echet derivative $\partial{\hat{\bc}}/\partial{\bth}$ at $\bth_1$ satisfies
\begin{equation}
\lim_{\bdelta\rightarrow 0} \frac{\left\|\hat{\bc}(\bth_1+\bdelta)-\hat{\bc}(\bth_1)-\left.\pd{\hat{\bc}}{\bth}\right|_{\bth_1}\bdelta\right\|_{\RN^N}} {\|\bdelta\|_{\RN^K}} = 0.
\end{equation}
In particular, for any nonzero $\ba\in\RN^K$,
\begin{equation}\label{eq:NgeqK_proof1}
\lim_{t\rightarrow 0} \frac{\left\|\hat{\bc}(\bth_1+t\ba)-\hat{\bc}(\bth_1)-t\left.\pd{\hat{\bc}}{\bth}\right|_{\bth_1}\ba\right\|_{\RN^N}} {\|t\ba\|_{\RN^K}} = 0,
\end{equation}
where $t$ is a scalar variable. Now, assume that $\ba\in\Nu{\partial{\hat{\bc}}/\partial{\bth}|_{\bth_1}}$. Then \eqref{eq:NgeqK_proof1} implies that
\begin{equation}\label{eq:NgeqK_proof2}
\lim_{t\rightarrow 0} \frac{\left\|\hat{\bc}(\bth_1+t\ba)-\hat{\bc}(\bth_1)\right\|_{\RN^N}} {\|t\ba\|_{\RN^K}} = 0.
\end{equation}
However, \eqref{eq:h_stability} and \eqref{eq:s_stability} imply that
\begin{align}
&\frac{\left\|\hat{\bc}(\bth_1+t\ba)-\hat{\bc}(\bth_1)\right\|_{\RN^N}} {\|t\ba\|_{\RN^K}}
= \frac{\left\|S(h(\bth_1+t\ba))-S(h(\bth_1))\right\|_{\RN^N}} {\|t\ba\|_{\RN^K}} \nonumber\\
&\hspace{3cm}\geq \alpha_s\frac{\left\|h(\bth_1+t\ba)-h(\bth_1)\right\|_{\RN^N}} {\|t\ba\|_{\RN^K}} \nonumber\\
&\hspace{3cm}\geq \alpha_s\alpha_h > 0
\end{align}
for every $t\neq 0$. This contradicts \eqref{eq:NgeqK_proof2} and therefore demonstrates that $\mathcal{N}(\partial{\hat{\bc}}/\partial{\bth}|_{\bth_1})=\{\zero\}$, which implies that $\dim(\mathcal{R}(\partial{\hat{\bc}}/\partial{\bth}|_{\bth_1}))= K$.

Next, note that $\partial{\hat{\bc}}/\partial{\bth}|_{\bth_1}=(\partial{S}/\partial{x}|_{h(\bth_1)})(\partial{h}/\partial{\bth}|_{\bth_1})$ so that $\mathcal{R}(\partial{\hat{\bc}}/\partial{\bth}|_{\bth_1})\subseteq \mathcal{R}(\partial{S}/\partial{x}|_{h(\bth_1)})= \mathcal{N}((\partial{S}/\partial{x}|_{h(\bth_1)})^*)^\perp$. Therefore,
\begin{align}\label{eq:K}
K &= \dim\left(\Ra{\left.\pd{\hat{\bc}}{\bth}\right|_{\bth_1}}\right) \nonumber\\
&\leq \dim\left(\Nu{\left(\left.\pd{S}{x}\right|_{h(\bth_1)}\right)^*}^\perp\right) \nonumber\\
&= N - \dim\left(\Nu{\left(\left.\pd{S}{x}\right|_{h(\bth_1)}\right)^*}\right).
\end{align}
Since \eqref{eq:K} holds for every $\bth_1\in\AAA$, it holds for the $\bth_1$ minimizing the right-hand side, completing the proof.
\end{proof}

Throughout the rest of the paper we focus on the case in which $N=K$ samples of $x(t)$ are obtained with an operator $S$ satisfying
\begin{equation}\label{eq:empthyNulspace}
\Nu{\left(\left.\pd{S}{x}\right|_{x_1}\right)^*} = \{\zero\},\quad \forall x_1\in\XX.
\end{equation}
This corresponds to sampling at the rate of innovation.

\section{Least Squares Recovery}
\label{sec:LSrecovery}

Suppose we want to recover a signal $x=h(\bth_0)\in\HH$ from its samples $\bc=S(x)$, where $\bth_0\in\RN^K$ is an unknown parameter vector and $S:\HH\rightarrow\RN^K$ is a given sampling operator. To address this problem, it is natural to seek the minimizer of the function
\begin{equation}\label{eq:ls}
\varepsilon(\bth) = \frac 1 2 \|S(h(\bth))-\bc\|_{\RN^K}^2 = \frac 1 2 \|\hat{\bc}(\bth)-\bc\|_{\RN^K}^2,
\end{equation}
where we defined $\hat{\bc}(\bth)=S(h(\bth))$. The reasoning behind this choice follows from the following observation
\begin{proposition}\label{lem:UniqueMinimizer}
Suppose that the function $h:\RN^K\rightarrow \HH$ satisfies \eqref{eq:h_stability} and that the operator $S:\HH\rightarrow\RN^K$ satisfies \eqref{eq:s_stability}. Then $\bth_0$ is the unique global minimizer of $\varepsilon(\bth)$.
\end{proposition}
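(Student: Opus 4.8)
The plan is to observe that the entire content of the proposition is encoded in the two stability conditions, chained together, and that no appeal to differentiability, convexity, or the structure of $\XX$ is required. The starting point is that $\varepsilon(\bth)\geq 0$ for every $\bth$, being one half of a squared norm, and that it vanishes at the true parameter: since $\bc=S(h(\bth_0))$, we have $\varepsilon(\bth_0)=\tfrac12\|\hat{\bc}(\bth_0)-\bc\|_{\RN^K}^2=0$. Hence $\bth_0$ attains the infimum of $\varepsilon$, so it is certainly \emph{a} global minimizer, and the minimum value is exactly zero. The only thing left to establish is uniqueness.

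For uniqueness, I would let $\bth$ be \emph{any} global minimizer and show $\bth=\bth_0$. Because the minimum value is zero, global minimality forces $\varepsilon(\bth)=0$, which means $S(h(\bth))=\bc=S(h(\bth_0))$, i.e. the two parameter vectors produce identical samples. Now I chain the stability inequalities. The left-hand inequality in \eqref{eq:s_stability}, applied to the pair $x_1=h(\bth)$ and $x_2=h(\bth_0)$ in $\XX$, gives $\alpha_s\|h(\bth)-h(\bth_0)\|_{\HH}\leq\|S(h(\bth))-S(h(\bth_0))\|_{\RN^K}=0$; since $\alpha_s>0$ this yields $h(\bth)=h(\bth_0)$. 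Feeding this into \eqref{eq:h_stability} gives $\alpha_h\|\bth-\bth_0\|_{\RN^K}\leq\|h(\bth)-h(\bth_0)\|_{\HH}=0$, and since $\alpha_h>0$ we conclude $\bth=\bth_0$. Thus no parameter vector other than $\bth_0$ can drive $\varepsilon$ to its minimum, which completes the argument.

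Strictly speaking there is no genuine obstacle here: the proof is a two-line consequence of the strict positivity of the stability constants, which convert ``equal samples'' into ``equal signals'' and then ``equal signals'' into ``equal parameters.'' The only point worth stating carefully is that this is a statement about the \emph{global} minimizer alone, and it says nothing about the non-convex geometry of $\varepsilon$ or the presence of spurious local minima and stationary points. That genuinely harder question---showing that every stationary point of $\varepsilon$, not merely its global minimizer, coincides with $\bth_0$, so that descent-type algorithms are guaranteed to recover the true parameters---is the substantive result the paper must address afterward, and it is there, rather than in this proposition, that the real work will lie.
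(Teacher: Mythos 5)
Your proof is correct and follows essentially the same route as the paper's: both establish that $\bth_0$ is a global minimizer from $\varepsilon\geq 0$ and $\varepsilon(\bth_0)=0$, and both obtain uniqueness by chaining the left-hand inequality of \eqref{eq:s_stability} with \eqref{eq:h_stability}. The only cosmetic difference is that the paper packages the chained inequalities as a quantitative lower bound on $\varepsilon(\bth)$ in terms of $\|\bth-\bth_0\|_{\RN^K}$, whereas you argue the contrapositive (zero objective forces equal samples, hence equal signals, hence equal parameters); the logical content is identical.
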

\begin{proof}
Clearly, $\varepsilon(\bth)\geq0$ for every $\bth\in\RN^K$ and $\varepsilon(\bth_0)=0$, so that $\bth_0$ is a global minimizer of $\varepsilon(\bth)$. This minimizer is unique since, due to \eqref{eq:h_stability} and \eqref{eq:s_stability}, $\varepsilon(\bth)\geq\alpha_s\alpha_h\|\bth-\bth_0\|_{\RN^K}$ so that $\varepsilon(\bth)>0$ for every $\bth\neq\bth_0$.
\end{proof}

The LS criterion \eqref{eq:ls} is also plausible when the samples $\bc$ correspond to a perturbation of the true sample vector by white Gaussian noise. In this case, the minimizer of \eqref{eq:ls} is a maximum-likelihood estimate of $\bth$ from $\bc$.


Unfortunately, the function $\varepsilon(\bth)$ is generally non-convex and might possess many local minima. It therefore seems that standard optimization techniques may fail in finding its global minimizer $\bth_0$. However, as we show next, when sampling at the rate of innovation, assumptions \eqref{eq:h_stability} and \eqref{eq:s_stability} guarantee that $\bth_0$ is the unique stationary point of $\varepsilon(\bth)$. Thus, any algorithm designed to trap a stationary point, necessarily converges to the true parameter vector $\bth_0$. The proof of this result follows that of \cite[Theorem~6]{dvorkind2008}, which treats the special case of SI signals and memoryless nonlinear samples.

\begin{theorem}\label{thm:UniqeStationaryPoint}
Suppose that the function $h:\RN^K\rightarrow \HH$ satisfies \eqref{eq:h_stability}, the operator $S:\HH\rightarrow\RN^K$ satisfies \eqref{eq:s_stability} and its  Fr\'{e}chet derivative $\partial{S}/\partial{x}$ satisfies \eqref{eq:empthyNulspace}. Then $\nabla\varepsilon(\bth_1)=\zero$ only if $\bth_1=\bth_0$.
\end{theorem}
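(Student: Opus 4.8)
The plan is to compute the gradient of $\varepsilon(\bth)$ explicitly and show that it cannot vanish away from $\bth_0$. Since $\varepsilon(\bth)=\tfrac12\|\hat{\bc}(\bth)-\bc\|^2$, the chain rule gives
\begin{equation}
\nabla\varepsilon(\bth_1) = \left(\left.\pd{\hat{\bc}}{\bth}\right|_{\bth_1}\right)^{\!T}\!\left(\hat{\bc}(\bth_1)-\bc\right),
\end{equation}
where $\hat{\bc}(\bth_1)-\bc = S(h(\bth_1))-S(h(\bth_0))$. I would then factor the Jacobian as in the proof of Proposition~\ref{lem:NgeqK}, namely $\partial\hat{\bc}/\partial\bth|_{\bth_1}=(\partial{S}/\partial{x}|_{h(\bth_1)})(\partial{h}/\partial{\bth}|_{\bth_1})$. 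The strategy is to argue by contradiction: assume $\nabla\varepsilon(\bth_1)=\zero$ for some $\bth_1\neq\bth_0$, and derive a violation of one of the stability hypotheses.

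The key observation is that the residual vector $\br := \hat{\bc}(\bth_1)-\bc$ is nonzero whenever $\bth_1\neq\bth_0$. This follows from Proposition~\ref{lem:UniqueMinimizer} (or directly from the lower bounds \eqref{eq:h_stability} and \eqref{eq:s_stability}), which give $\|\br\|\geq\alpha_s\alpha_h\|\bth_1-\bth_0\|>0$. The condition $\nabla\varepsilon(\bth_1)=\zero$ then reads $(\partial{h}/\partial{\bth}|_{\bth_1})^{T}(\partial{S}/\partial{x}|_{h(\bth_1)})^{T}\br=\zero$. First I would use the fact, established in the proof of Proposition~\ref{lem:NgeqK}, that $\partial{\hat{\bc}}/\partial{\bth}|_{\bth_1}$ has trivial null space and hence that its adjoint is surjective; equivalently, $(\partial{h}/\partial{\bth}|_{\bth_1})^{T}$ restricted to the range of $(\partial{S}/\partial{x}|_{h(\bth_1)})^{T}$ is injective. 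The cleaner route, however, is to peel the factors off one at a time using the two stability assumptions separately.

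Concretely, the plan is to show that neither factor of the Jacobian can annihilate a nonzero vector lying in the relevant subspace. The hypothesis \eqref{eq:empthyNulspace}, $\Nu{(\partial{S}/\partial{x}|_{x_1})^*}=\{\zero\}$, is designed precisely so that $(\partial{S}/\partial{x}|_{h(\bth_1)})^{T}\br\neq\zero$ for the nonzero residual $\br$ (here the residual genuinely lives in $\RN^K$, so surjectivity of $(\partial S/\partial x)^*$ onto $\RN^K$, i.e.\ triviality of its null space, is exactly what is needed). Having secured that $\bd:=(\partial{S}/\partial{x}|_{h(\bth_1)})^{T}\br$ is a nonzero vector in $\RN^K$, it remains to rule out $\bd\in\Nu{(\partial{h}/\partial{\bth}|_{\bth_1})^{T}}$. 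This is where the injectivity of the Jacobian $\partial\hat\bc/\partial\bth|_{\bth_1}$ must be converted into a statement about $(\partial h/\partial\bth|_{\bth_1})^T$; the subtlety is that $\bd$ need not lie in a range where injectivity is automatic. I expect this last step to be the main obstacle: one must verify that the particular vector $\bd$ on which $(\partial h/\partial\bth)^T$ acts cannot be in the null space, which forces me to track the composition carefully rather than treat the two factors as independent. The resolution is to keep the product together—using that $\partial\hat\bc/\partial\bth|_{\bth_1}$ has empty null space means its adjoint is onto $\RN^K$, and combine this with the fact that $\br$ is a genuine nonzero residual to conclude $\nabla\varepsilon(\bth_1)=(\partial\hat\bc/\partial\bth|_{\bth_1})^T\br\neq\zero$, contradicting the assumption. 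Closing this contradiction completes the proof.
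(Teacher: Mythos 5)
Your proposal is correct and ultimately takes the same route as the paper: write $\nabla\varepsilon(\bth_1)=\left(\left.\pd{\hat{\bc}}{\bth}\right|_{\bth_1}\right)^*\left(\hat{\bc}(\bth_1)-\bc\right)$, note the residual is nonzero for $\bth_1\neq\bth_0$ by Proposition~\ref{lem:UniqueMinimizer}, and use the injectivity of the square $K\times K$ Jacobian (established in the proof of Proposition~\ref{lem:NgeqK}) to conclude its adjoint has trivial null space, so the gradient cannot vanish. Your abandoned factor-peeling detour contains a slip---$\bd=(\partial S/\partial x|_{h(\bth_1)})^*\br$ lives in $\HH$, not $\RN^K$, and $\Nu{(\partial h/\partial\bth|_{\bth_1})^*}$ is huge---which is exactly why that route stalls and why keeping the composition together, as you finally do and as the paper does, is the right move.
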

\begin{proof}
The gradient $\nabla\varepsilon(\bth_1)$ is given by
\begin{equation}\label{eq:gradC}
\nabla\varepsilon(\bth_1) = \left(\left.\pd{\hat{\bc}}{\bth}\right|_{\bth_1}\right)^* \left(\hat{\bc}(\bth_1)-\bc\right).
\end{equation}
We showed in the proof of Proposition~\ref{lem:NgeqK} that $\Ra{\partial{\hat{\bc}}/\partial{\bth}|_{\bth_1}} = \RN^K$. Since here $\partial{\hat{\bc}}/\partial{\bth}|_{\bth_1}$ is a $K\times K$ matrix, it follows that
\begin{equation}
\Nu{\left(\left.\pd{\hat{\bc}}{\bth}\right|_{\bth_1}\right)^*} = \Ra{\left.\pd{\hat{\bc}}{\bth}\right|_{\bth_1}}^\perp = \{\zero\},
\end{equation}
so that $\nabla\varepsilon(\bth_1)=0$ only if $\hat{\bc}(\bth_1)-\bc=\zero$. This, by Proposition~\ref{lem:UniqueMinimizer}, happens only if $\bth_1=\bth_0$, completing the proof.
\end{proof}

The importance of Theorem~\ref{thm:UniqeStationaryPoint} lies in the fact that it provides a unified mechanism for recovering FRI signals from samples taken at the rate of innovation. Namely, rather than developing a different algorithm for every choice of signal family and sampling method, we can employ the same general-purpose optimization technique to find the stationary point of \eqref{eq:ls}. Furthermore, this strategy is also advantageous over the iterative approach of \cite{blumensath2011}, as it avoids the need for projecting the signal estimate onto $\XX$ in each iteration, an operation that possesses no closed form solution for most FRI signal classes.

\section{Iterative Recovery}
\label{sec:IterativeRecovery}

There are numerous optimization algorithms that can be used to find the stationary point of the objective function $\varepsilon(\bth)$ over $\AAA$. For simplicity, we focus here on unconstrained optimization methods, namely those that can be applied when $\AAA=\RN^K$. This does not limit the generality of the discussion since if $\AAA\neq\RN^K$, then the constrained problem $\min_{\bth\in\AAA}\varepsilon(\bth)$ can be transformed into the unconstrained problem $\min_{\tilde{\bth}\in\RN^K}\varepsilon(p(\tilde{\bth}))$, where $p:\RN^K\rightarrow\AAA$ is one-to-one and onto. The latter problem possesses a unique stationary point $\tilde{\bth}_0=p^{-1}(\bth_0)$. Therefore, once $\tilde{\bth}_0$ is determined, the desired solution can be computed as $\bth_0=p(\tilde{\bth}_0)$. For example, the model \eqref{eq:ChSounding} with the set $\AAA$ of constraints defined by \eqref{eq:ConstrTD} can be handled by defining
\begin{equation}\label{eq:ParamThetaNew}
\tilde{\theta}^{\rm L}_m = \ln(a_m-a_0),\quad
\tilde{\theta}^{\rm N}_m = \tan\left(\pi\frac{t_m-t_{m-1}-\bar{T}}{\Delta}\right),
\end{equation}
where $\bar{T}=(T_{\max}+T_{\min})/2$ and $\Delta=T_{\max}-T_{\min}$, so that
\begin{equation}\label{eq:ParamThetaNewInverse}
a_m=e^{\tilde{\theta}^{\rm L}_m}+a_0, \quad t_m = t_0+m\bar{T}+\frac{\Delta}{\pi}\sum_{i=1}^m \arctan\left(\tilde{\theta}^{\rm N}_i\right).
\end{equation}
With this choice, the set $\XX$ of all feasible signals is obtained by varying $\tilde{\bth}^{\rm L}$ and $\tilde{\bth}^{\rm N}$ over the entire space $\RN^M$ and not over some subset of $\RN^M$.

Most unconstrained optimization methods start with an initial guess $\bth^0$ and perform iterations of the form
\begin{equation}\label{eq:iterGeneral}
\bth^{\ell+1}=\bth^{\ell} - \gamma^\ell \bB^\ell \nabla\varepsilon(\bth^\ell),
\end{equation}
where $\gamma^\ell$ is a scalar step size obtained by means of a one dimensional search and $\bB^\ell$ is a positive definite matrix. Due to the structure of $\nabla\varepsilon(\bth^\ell)$ in our case (see \eqref{eq:gradC}), the iterations \eqref{eq:iterGeneral} can be given a simple interpretation, as shown in Fig.~\ref{fig:Alg}. Specifically, at the $\ell$th iteration, the current estimate $\bth^\ell$ of the parameters $\bth$ is used to construct our estimate $\hat{x}^\ell$ of the signal $x$ by applying the function $h$. This estimate is then sampled using the operator $S$ to obtain an estimated sample vector $\hat{\bc}^\ell$. Finally, the difference between $\hat{\bc}^\ell$ and the true set of samples $\bc$ is multiplied by a correction matrix and added to $\bth^\ell$ to yield the updated estimate $\bth^{\ell+1}$ of the parameter vector $\bth$.
\begin{figure}[t]
\centering
\includegraphics[scale=0.85]{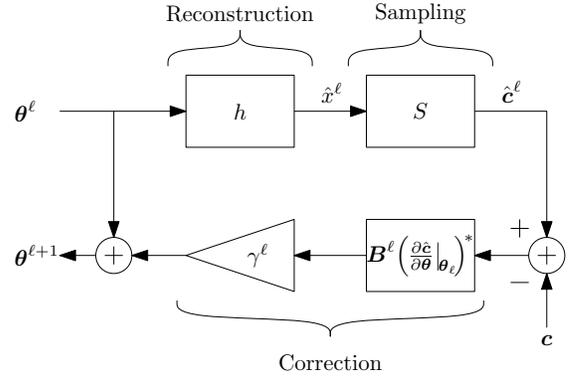}
\caption{Schematic interpretation of one iteration of \eqref{eq:iterGeneral}.}
\label{fig:Alg}
\end{figure}

In our setting, the objective function $\varepsilon(\bth)$ is bounded from below. The iterations $\eqref{eq:iterGeneral}$ are therefore guaranteed to converge to a stationary point of $\varepsilon(\bth)$ if $\gamma^\ell$ is chosen to satisfy the Wolfe conditions \cite{nocedal1999}, $\bB^\ell$ is chosen such that
\begin{equation}\label{eq:cos}
\frac{\left\langle\bB^\ell \nabla\varepsilon(\bth^\ell),\nabla\varepsilon(\bth^\ell)\right\rangle_{\RN^K}}{ \left\|\bB^\ell \nabla\varepsilon(\bth^\ell)\right\|_{\RN^K}\left\|\nabla\varepsilon(\bth^\ell)\right\|_{\RN^K}} > \delta
\end{equation}
for some constant $\delta>0$ independent of $\ell$, and the gradient $\nabla\varepsilon(\bth)$ is Lipschitz continuous in an environment of the level-set $\mathcal{N}=\{\bth:\varepsilon(\bth)\leq\varepsilon(\bth^0)\}$ \cite{nocedal1999}.

A step size satisfying the Wolfe conditions can be found by using the backtracking method \cite{nocedal1999}, as presented in Algorithm~\ref{alg:Backtracking}. Condition \eqref{eq:cos} is trivially satisfied with $\bB^\ell=\bI$, which corresponds to the steepest descent method. As we show in Appendix~\ref{sec:ConvNewton}, this condition is also satisfied with $\bB^\ell=((\partial{\hat{\bc}}/\partial{\bth}|_{\bth^\ell})^*(\partial{\hat{\bc}}/\partial{\bth}|_{\bth^\ell}))^{-1}$ if
\begin{equation}\label{eq:beta_h}
\|h(\bth_1)-h(\bth_2)\|_{\HH} \leq \beta_h\|\bth_1-\bth_2\|_{\RN^K}
\end{equation}
for some $\beta_h<\infty$ and for all $\bth_1,\bth_2\in\mathcal{N}$. This choice belongs to the class of quasi-Newton methods, which typically converge much faster than steepest descent. Finally, we show in Appendix~\ref{sec:GradLip} that a sufficient condition for $\nabla\varepsilon(\bth)$ to be Lipschitz continuous over $\mathcal{N}$ is that the derivative of $h$ be Lipschitz continuous there, namely that
\begin{equation}\label{eq:hprime_stability}
\left\|\left.\pd{h}{\bth}\right|_{\bth_1}-\left.\pd{h}{\bth}\right|_{\bth_2}\right\| \leq \beta_{h'}\|\bth_1-\bth_2\|_{\RN^K}
\end{equation}
for some $\beta_{h'}<\infty$ and for all $\bth_1,\bth_2\in\mathcal{N}$. The analyses in Appendices~\ref{sec:ConvNewton} and \ref{sec:GradLip} follow closely those in the proof of \cite[Theorem~7]{dvorkind2008}. To summarize, we have the following result.
\begin{theorem}
Suppose that the function $h:\RN^K\rightarrow \HH$ satisfies \eqref{eq:h_stability}, its Fr\'{e}chet derivative $\partial{h}/\partial{\bth}$ satisfies \eqref{eq:hprime_stability}, the operator $S:\HH\rightarrow \RN^K$ satisfies \eqref{eq:s_stability} and its Fr\'{e}chet derivative $\partial{S}/\partial{x}$ satisfies \eqref{eq:empthyNulspace}. Consider the iterations \eqref{eq:iterGeneral}, where the step size $\gamma^\ell$ is obtained via Algorithm~\ref{alg:Backtracking}. Then each of the following options guarantees that $\bth^\ell\rightarrow\bth_0$:
\begin{enumerate}
\item $\bB^\ell=\bI$.
\item $\bB^\ell=((\partial{\hat{\bc}}/\partial{\bth}|_{\bth^\ell})^*(\partial{\hat{\bc}}/\partial{\bth}|_{\bth^\ell}))^{-1}$ and condition \eqref{eq:beta_h} holds.
\end{enumerate}
\end{theorem}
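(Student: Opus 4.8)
The plan is to obtain the conclusion by combining a generic convergence guarantee for descent iterations with the uniqueness statement of Theorem~\ref{thm:UniqeStationaryPoint}. Concretely, I would invoke the standard Zoutendijk-type convergence theorem \cite{nocedal1999}: the iterations \eqref{eq:iterGeneral} drive $\nabla\varepsilon(\bth^\ell)$ to zero provided the step size $\gamma^\ell$ obeys the Wolfe conditions, the search direction obeys the angle condition \eqref{eq:cos} uniformly in $\ell$, and $\nabla\varepsilon$ is Lipschitz continuous on a neighborhood of the level set $\mathcal{N}=\{\bth:\varepsilon(\bth)\leq\varepsilon(\bth^0)\}$. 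Since \eqref{eq:iterGeneral} is a descent method, all iterates remain in $\mathcal{N}$, and the coercive lower bound $\varepsilon(\bth)\geq\frac12(\alpha_s\alpha_h)^2\|\bth-\bth_0\|_{\RN^K}^2$ from Proposition~\ref{lem:UniqueMinimizer} makes $\mathcal{N}$ bounded; hence the iterates possess accumulation points, each of which is a stationary point of $\varepsilon$. By Theorem~\ref{thm:UniqeStationaryPoint}, $\bth_0$ is the only stationary point, so every accumulation point equals $\bth_0$ and the full sequence converges, $\bth^\ell\rightarrow\bth_0$. The task therefore reduces to verifying the three hypotheses of the convergence theorem for each choice of $\bB^\ell$.

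The Wolfe conditions require no separate argument, since the backtracking search of Algorithm~\ref{alg:Backtracking} produces a compliant step size by construction \cite{nocedal1999}. The angle condition \eqref{eq:cos} is immediate for option~1, where $\bB^\ell=\bI$ makes the cosine identically equal to one. For option~2, writing $\bA_\ell=(\partial\hat{\bc}/\partial\bth)|_{\bth^\ell}$, I would bound the extreme singular values of $\bA_\ell$ uniformly over $\mathcal{N}$. The lower bound reuses the directional-derivative estimate from the proof of Proposition~\ref{lem:NgeqK}: chaining the left inequality of \eqref{eq:s_stability} with the infinitesimal form of \eqref{eq:h_stability} gives $\|\bA_\ell\ba\|_{\RN^K}\geq\alpha_s\alpha_h\|\ba\|_{\RN^K}$ for every $\ba$; the upper bound combines the right inequality of \eqref{eq:s_stability} with \eqref{eq:beta_h} to give $\|\bA_\ell\ba\|_{\RN^K}\leq\beta_s\beta_h\|\ba\|_{\RN^K}$. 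Because $\bB^\ell=(\bA_\ell^*\bA_\ell)^{-1}$ is symmetric positive definite, the cosine in \eqref{eq:cos} is bounded below by $\lambda_{\min}(\bB^\ell)/\lambda_{\max}(\bB^\ell)=\sigma_{\min}(\bA_\ell)^2/\sigma_{\max}(\bA_\ell)^2\geq(\alpha_s\alpha_h)^2/(\beta_s\beta_h)^2=:\delta>0$, independently of $\ell$. This is precisely the computation I would carry out in Appendix~\ref{sec:ConvNewton}, and it is here that \eqref{eq:beta_h} enters the hypotheses of option~2.

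The remaining hypothesis, Lipschitz continuity of $\nabla\varepsilon$ on $\mathcal{N}$, applies to both options and is the most delicate point; I would relegate it to Appendix~\ref{sec:GradLip}. Starting from $\nabla\varepsilon(\bth)=\bA(\bth)^*(\hat{\bc}(\bth)-\bc)$, I would split $\|\nabla\varepsilon(\bth_1)-\nabla\varepsilon(\bth_2)\|$ into a term controlled by the Jacobian variation $\|\bA(\bth_1)-\bA(\bth_2)\|$ and a term controlled by the residual variation $\|\hat{\bc}(\bth_1)-\hat{\bc}(\bth_2)\|$. The first is handled by \eqref{eq:hprime_stability}, which makes $\partial h/\partial\bth$, and hence after composition with $\partial S/\partial x$ also $\bA$, Lipschitz on $\mathcal{N}$; the second uses the upper bounds assembled above together with the fact that Lipschitz continuity of $\partial h/\partial\bth$ over the \emph{bounded} set $\mathcal{N}$ already forces a finite Lipschitz modulus for $h$ there. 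This composition estimate is the main obstacle: unlike the purely algebraic singular-value bounds, it genuinely requires the extra regularity \eqref{eq:hprime_stability} and careful tracking of the chain rule for $\hat{\bc}=S\circ h$, which is why the regularity hypothesis on $h$ is common to both choices of $\bB^\ell$. Once all three hypotheses are verified, the convergence theorem yields a stationary point, and Theorem~\ref{thm:UniqeStationaryPoint} identifies it as $\bth_0$, completing the argument.
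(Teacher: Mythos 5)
Your proposal is correct and takes essentially the same route as the paper: a generic line-search convergence guarantee (Wolfe-type step from backtracking, the angle condition \eqref{eq:cos}, and Lipschitz continuity of $\nabla\varepsilon$ on the level set) combined with the uniqueness of the stationary point from Theorem~\ref{thm:UniqeStationaryPoint}, with the quasi-Newton angle condition verified through the same singular-value bounds $\alpha_s\alpha_h\leq\sigma(\partial\hat{\bc}/\partial\bth)\leq\beta_s\beta_h$ that appear in Appendix~\ref{sec:ConvNewton} and the gradient Lipschitz property derived from \eqref{eq:hprime_stability} as in Appendix~\ref{sec:GradLip}. The only deviations are cosmetic: you lower-bound the cosine by $\lambda_{\min}(\bB^\ell)/\lambda_{\max}(\bB^\ell)=\sigma_{\min}^2/\sigma_{\max}^2$ instead of the paper's direct computation yielding $\sigma_{\min}/\sigma_{\max}$, and you make explicit the boundedness-of-$\mathcal{N}$/accumulation-point step (and the fact that \eqref{eq:hprime_stability} on a bounded level set supplies the needed Lipschitz modulus for $h$) that the paper leaves implicit.
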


\begin{algorithm}[t]\caption{Backtracking line search.}
\begin{algorithmic}
\STATE{set $\bg^\ell=\nabla\varepsilon(\bth^\ell)$, $\bd^\ell=-\bB^\ell\bg^\ell$, $\delta = 1$ and $\rho,\eta\in(0,1)$}
\WHILE{$\varepsilon(\bth^\ell+\delta\bd^\ell) > \varepsilon(\bth^\ell)+\eta\delta\langle \bd^\ell, \bg^\ell \rangle_{\RN^K}$}
\STATE{$\delta \leftarrow \rho\delta$}
\ENDWHILE
\RETURN{$\gamma^\ell = \delta$}
\end{algorithmic}
\label{alg:Backtracking}
\end{algorithm}

\section{Application to Channel Sounding}
\label{sec:ApplicationChannelSounding}

We now demonstrate our approach in the channel sounding setting \eqref{eq:ChSounding}. Specifically, suppose that
\begin{equation}\label{eq:ChannelSoundingM}
x(t) = \sum_{m=1}^M a_m g(t-t_m),\quad t\in[0,\tau],
\end{equation}
where $g(t)$ is a known pulse shape, $\{a_m\}_{m=1}^M$ are unknown amplitudes, and $\{t_m\}_{m=1}^M$ are unknown time-delays. As explained in Section~\ref{sec:implicationToUOS}, the parameter vector
\begin{equation}
\bth=
\begin{pmatrix}t_1 & \cdots & t_M & a_1 & \cdots & a_M\end{pmatrix}^T,
\end{equation}
cannot be stably recovered unless the amplitudes all surpass a certain threshold and the pulses are well separated yet confined to a bounded interval. We therefore adopt the assumptions \eqref{eq:ConstrTD} and transform the optimization problem into an unconstrained one by using the parameter vector $\tilde{\bth}=p^{-1}(\bth)$ described in \eqref{eq:ParamThetaNew}, with the transformation $\bth=p(\tilde{\bth})$ of \eqref{eq:ParamThetaNewInverse}. 
Our goal is to recover the signal parameters from the samples \eqref{eq:NonlinSamp1}, where $\{s_n(t)\}_{n=1}^N$ are sampling kernels in $L_2([0,\tau])$ and $f(\cdot)$ is a nonlinear response function.

As discussed in the introduction, when $f(\cdot)$ is the identity operator, there are several combinations of pulse shapes $g(t)$ and sampling kernels $\{s_n(t)\}$ that can be treated via existing algorithms in a stable manner, such as \cite{tur2011,gedalyahu2011}. However, none of the existing techniques is applicable when $f(\cdot)$ is nonlinear. Furthermore, as we demonstrate in Section~\ref{sec:GaussiansNonlin} below, our approach allows recovery from SI samples with a kernel that is not supported by \cite{tur2011}. Moreover, in Section~\ref{sec:KfirRonen} we apply our technique in a multichannel setting for which the algorithm of \cite{gedalyahu2011} is applicable, and show the advantage of our approach in the presence of noise.

To apply the quasi-Newton or steepest decent methods, we note that, with the transformation $\bth=p(\tilde{\bth})$ of \eqref{eq:ParamThetaNewInverse},
\begin{equation}\label{eq:chainRule}
\pd{\hat{\bc}}{\tilde{\bth}} = \pd{\hat{\bc}}{\bth}\pd{p}{\tilde{\bth}}.
\end{equation}
Explicit computation shows that
\begin{align}
\pd{\hat{\bc}}{\bth} = \bC \begin{pmatrix}\bA & \bB \end{pmatrix}
\end{align}
with
\begin{align}\label{eq:matA}
\bA =
\begin{pmatrix}
-a_1\langle g'(\cdot-t_1), s_1\rangle & \cdots & -a_M\langle g'(\cdot-t_M), s_1\rangle \\
\vdots & & \vdots\\
-a_1\langle g'(\cdot-t_1), s_N\rangle & \cdots & -a_M\langle g'(\cdot-t_M), s_N\rangle
\end{pmatrix},
\end{align}
\begin{align}\label{eq:matB}
\bB =
\begin{pmatrix}
\langle g(\cdot-t_1), s_1\rangle & \cdots & \langle g(\cdot-t_M), s_1\rangle \\
\vdots & & \vdots \\
\langle g(\cdot-t_1), s_N\rangle & \cdots & \langle g(\cdot-t_M), s_N\rangle
\end{pmatrix}
\end{align}
and
\begin{align}
\bC = {\rm diag}
\begin{pmatrix}
f'(\langle x, s_1\rangle) & \cdots & f'(\langle x, s_N\rangle).
\end{pmatrix}
\end{align}
Furthermore,
\begin{align}
\pd{p}{\tilde{\bth}} = \begin{pmatrix} \bD & \bzero \\ \bzero & \bE \end{pmatrix}
\end{align}
with
\begin{align}
\bD = {\rm diag}
\begin{pmatrix}
e^{\tilde{\theta}_1} & \cdots & e^{\tilde{\theta}_M}
\end{pmatrix}
\end{align}
and
\begin{align}
\bE = \frac{\Delta}{\pi}
\begin{pmatrix}
\frac{1}{1+\tilde{\theta}_{M+1}^2} & 0                                  &  \cdots & 0 \\
\frac{1}{1+\tilde{\theta}_{M+1}^2} & \frac{1}{1+\tilde{\theta}_{M+2}^2} &  \cdots & 0 \\
\vdots                             & \vdots                             &  \ddots & \vdots\\
\frac{1}{1+\tilde{\theta}_{M+1}^2} & \frac{1}{1+\tilde{\theta}_{M+2}^2} &  \cdots & \frac{1}{1+\tilde{\theta}_{2M}^2}
\end{pmatrix}.
\end{align}

We now demonstrate our method in several specific settings.

\subsection{Gaussian Pulses and Gaussian Kernels with Nonlinear Distortion}
\label{sec:GaussiansNonlin}

Consider the sampling system of Fig.~\ref{fig:NonlinearSampling}, in which $x(t)$ is sampled after passing through an amplitude limiter $f(\cdot)$ and being convolved with a filter $s(-t)$. The resulting samples can be described by \eqref{eq:NonlinSamp1}, with $s_n(t)=s(t-T_0-nT_{\rm s})$. Since the model \eqref{eq:ChannelSoundingM} is clearly determined by $K=2M$ parameters, we would like to recover any such $x(t)$ from $N=2M$ samples. We choose the sampling period $T_{\rm s}$ to equal $\tau/N$ and the offset $T_0$ to be $T_{\rm s}/2$, so that the sampling functions span the entire observation segment $[0,\tau]$.

\begin{figure}[t]\centering
\includegraphics[scale=0.85]{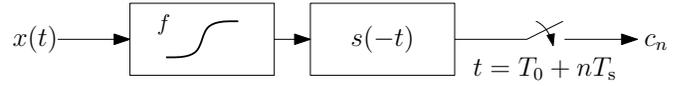}
\caption{Nonlinear and nonideal sampling.}
\label{fig:NonlinearSampling}
\end{figure}

Figure~\ref{fig:GaussiansNonlinear} demonstrates the convergence of the Newton iterations for recovering $M=2$ pulses over the period $[0,1]$ from $N=4$ samples. Here, the pulse shape and the sampling filter were taken to be Gaussian functions with variances $\sigma_g=0.05$ and $\sigma_s=0.1$, respectively.
Note that, with this choice, all inner products in \eqref{eq:matA} and \eqref{eq:matB} can be computed analytically at every iteration. The nonlinear response curve was set to be $f(c)=100 \arctan(0.01 c)$. The constraints \eqref{eq:ConstrTD} we assumed on the parameters corresponded to $a_0=0.1$, $T_{\min}=0.3$, $T_{\max}=0.7$ and $t_0=-0.3$.

The true parameters in this experiment were $t_1 =0.2$, $t_2=0.8$, $a_1=1$ and $a_2=5$. As shown in Fig.~\ref{fig:GaussiansNonlinear}\subref{fig:iteration0}, the iterations were initialized at $t_1=1/3$, $t_2=2/3$ and $a_1=a_2=3$. The estimated samples at this point, shown in `x'-marks, deviate substantially from the true samples, marked with circles. As can be seen, though, this gap decreases quickly in the first $15$ iterations (see Fig~\ref{fig:GaussiansNonlinear}\subref{fig:iteration15}) and almost completely vanishes after $30$ iterations (Fig~\ref{fig:GaussiansNonlinear}\subref{fig:iteration30}). Figure~\ref{fig:GaussiansNonlinear}\subref{fig:ObjVsIt} shows the rapid decrease in the LS objective \eqref{eq:ls} as a function of the iterations.

\begin{figure}[t]\centering
\subfloat[]{\includegraphics[scale=0.4]{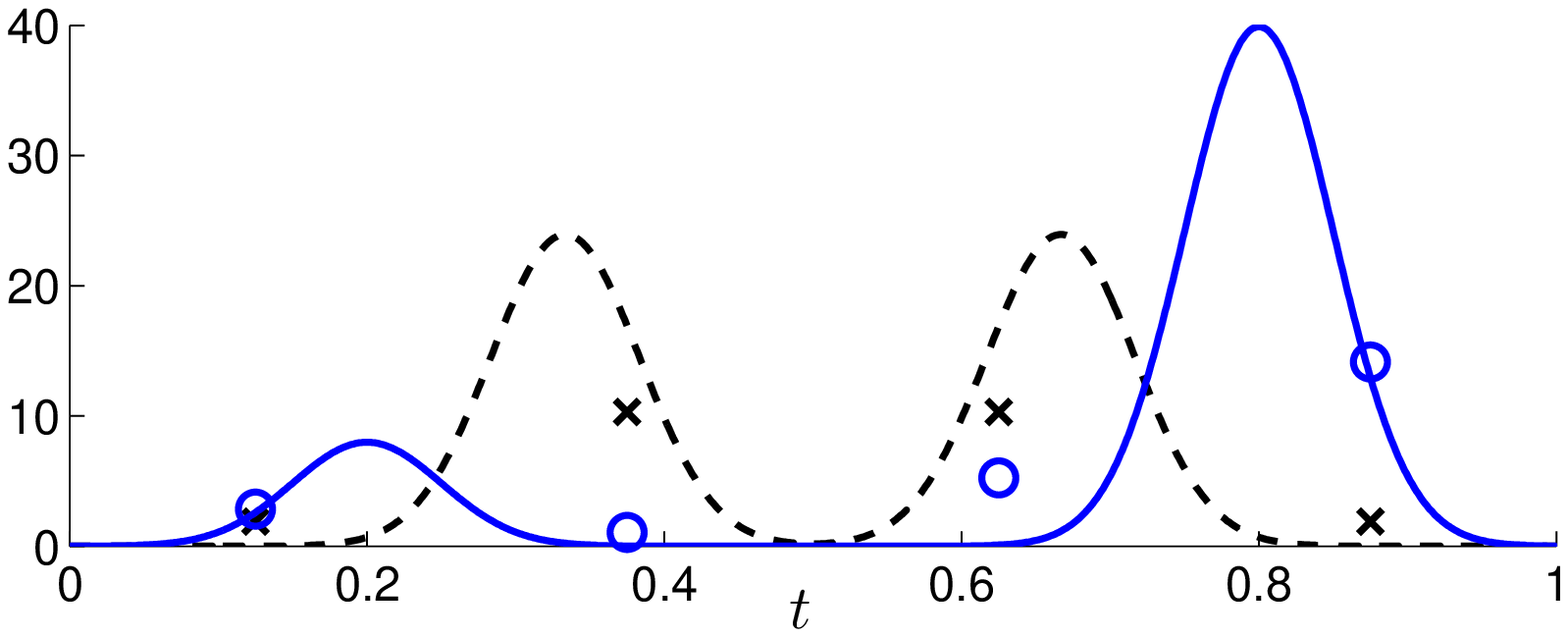}\label{fig:iteration0}}\\
\subfloat[]{\includegraphics[scale=0.4]{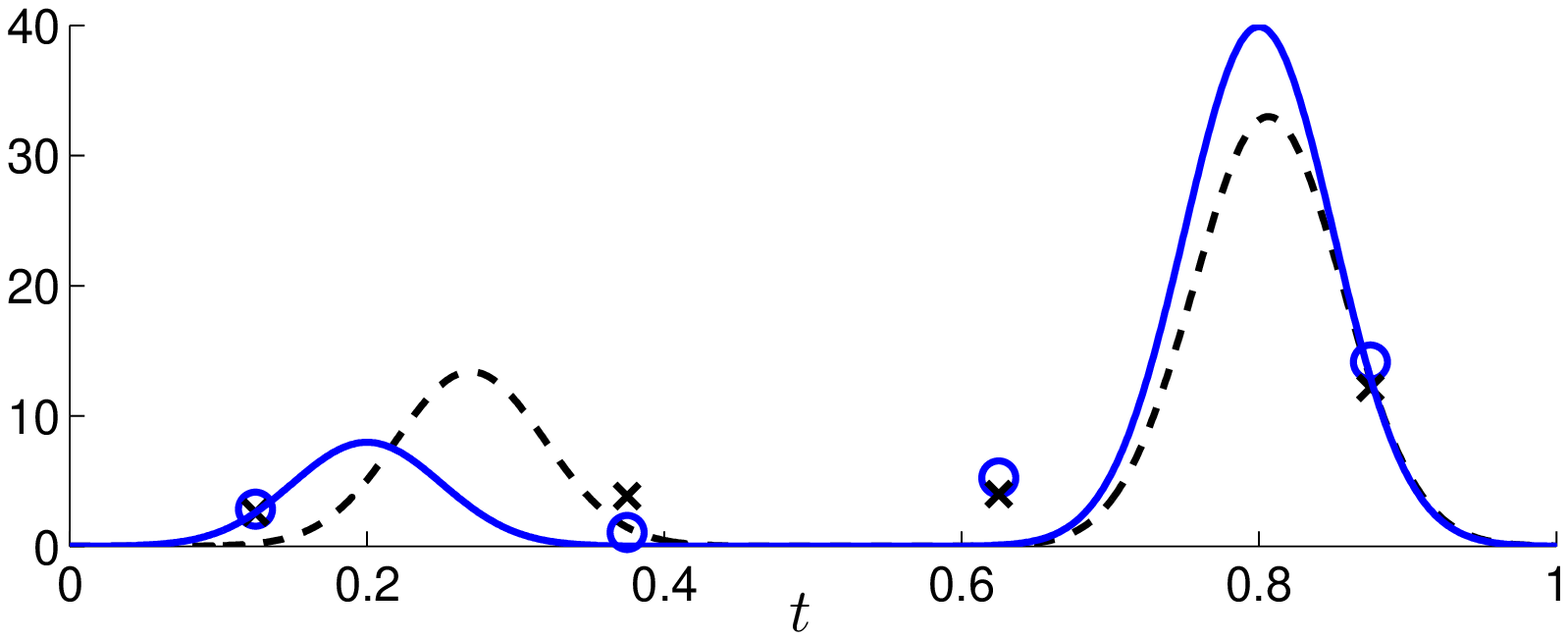}\label{fig:iteration15}}\\
\subfloat[]{\includegraphics[scale=0.4]{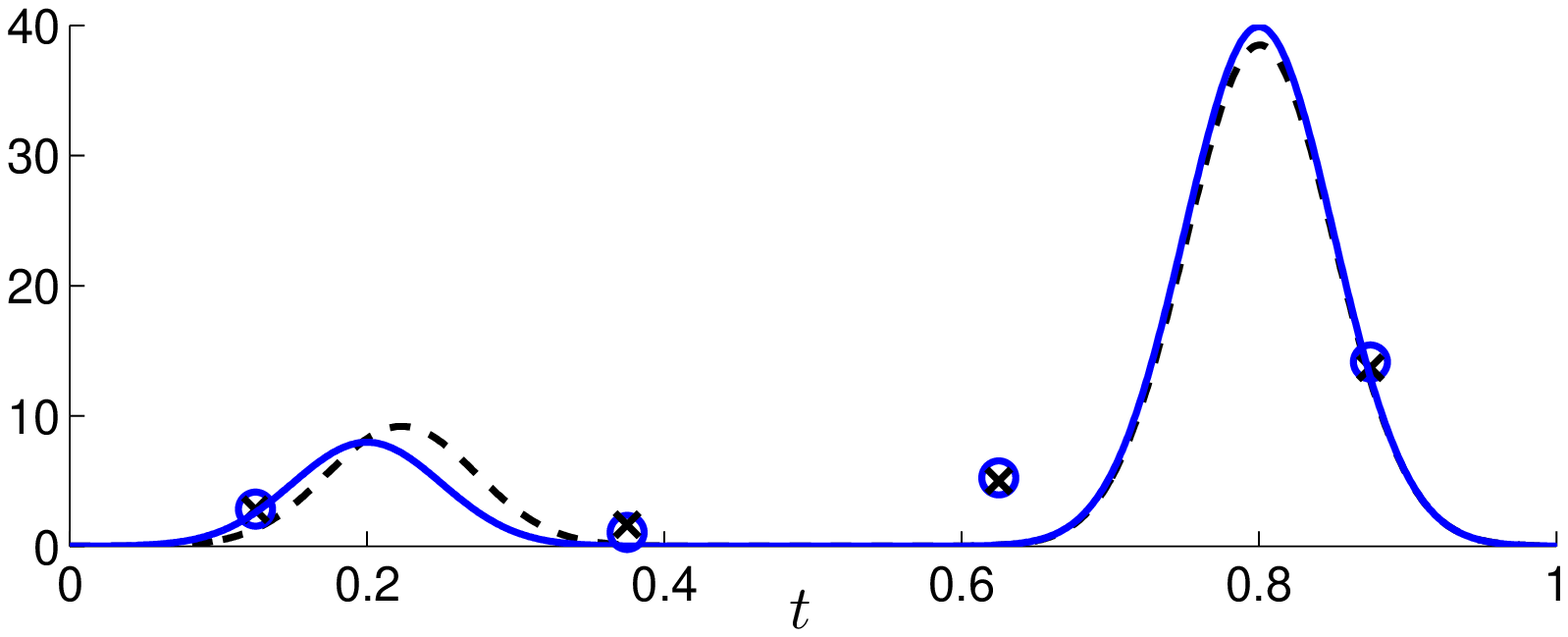}\label{fig:iteration30}}\\
\subfloat[]{\includegraphics[scale=0.4]{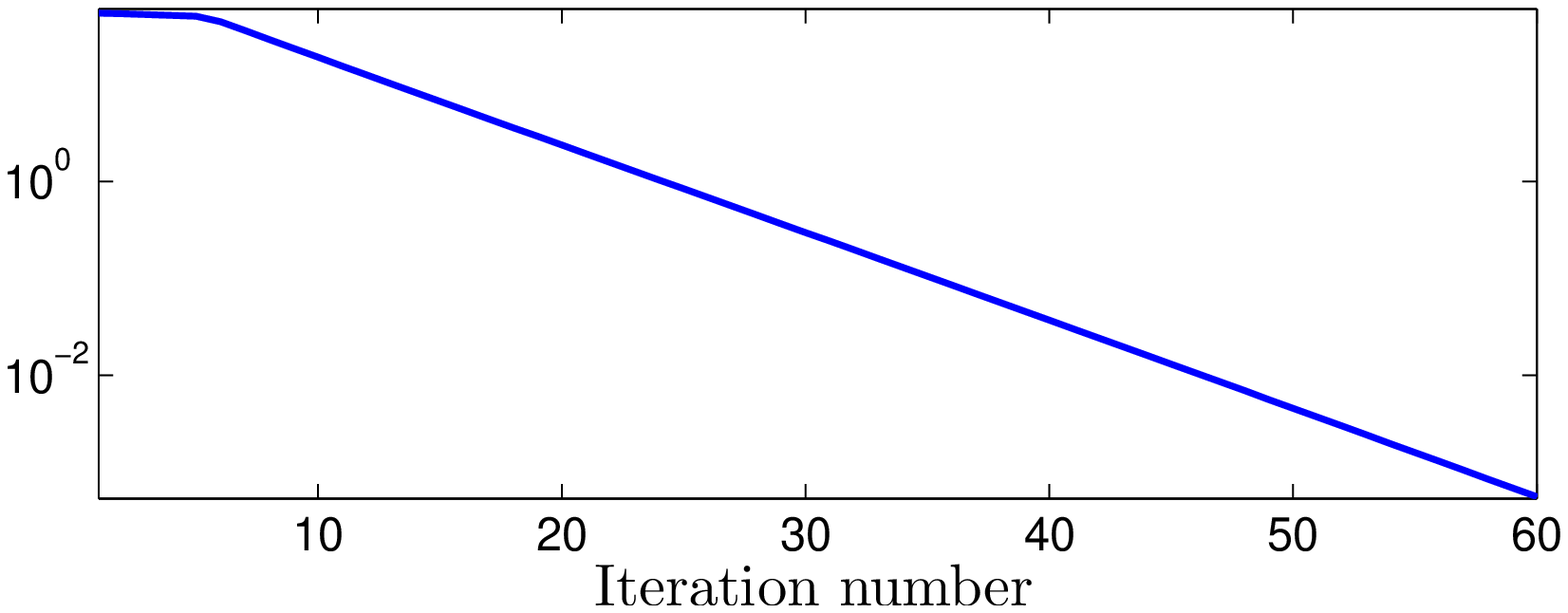}\label{fig:ObjVsIt}}
\caption{Convergence of Newton iterations for pulse stream recovery. (a)~Initialization. (b)~$15$th iteration. (c)~$30$th iteration. (d)~LS objective value as a function of the iterations.}
\label{fig:GaussiansNonlinear}
\end{figure}

Figure~\ref{fig:MSEvsSNRGaussiansNonlinear} demonstrates the behavior of the algorithm in the presence of noise. The setting here is the same as that of Fig.~\ref{fig:GaussiansNonlinear} with the distinction that white Gaussian noise is added to the samples prior to recovery. This figure depicts the mean squared error (MSE) in $x(t)$, defined as
\begin{equation}
{\rm MSE}=\EE\!\left[\int_0^\tau \left|x(t)-\hat{x}(t)\right|^2 dt\right],
\end{equation}
as a function of the signal-to-noise (SNR) ratio. The solid line corresponds to the Cram\'{e}r-Rao bound (CRB), developed in \cite{ben-haim2011}, which is a lower bound on the MSE attainable by any unbiased estimation technique. As can be seen, the MSE of our method coincides with the CRB in high SNR scenarios and outperforms at in low SNR levels. This is a result of the fact that our technique is biased.

\begin{figure}[t]\centering
\includegraphics[scale=0.4, trim=0cm 0.1cm 0cm 0.1cm, clip]{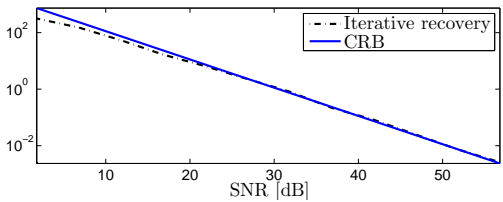}
\caption{MSE as a function of SNR for pulse stream recovery in the setting of Fig.~\ref{fig:GaussiansNonlinear}.}
\label{fig:MSEvsSNRGaussiansNonlinear}
\end{figure}

\subsection{Periodic Pulses and Sinusoidal Kernels}
\label{sec:KfirRonen}

Next, we turn to demonstrate our approach in a periodic pulse-stream scenario with the multichannel sampling system of \cite{gedalyahu2011}. Specifically, suppose that $g(t)$ in \eqref{eq:ChannelSoundingM} is $\tau$-periodic with Fourier coefficients $\tilde{g}_k = (1/\tau)\langle g,\phi_k\rangle$, where $\phi_k(t)=e^{2\pi j k t/\tau}$. In \cite{gedalyahu2011}, it was shown that the pulse parameters can be identified in this setting by using the multichannel sampling system depicted in Fig.~\ref{fig:MultichannelSampling}, where the sampling kernels $s_n(t)$ correspond to combinations of the complex exponentials $\{\phi_k(t)\}_{k\in\mathcal{K}}$ with $\mathcal{K}$ being a set of consecutive indices. The algorithm of \cite{gedalyahu2011} was developed for linear sampling, so that $f(\cdot)$ of \eqref{eq:NonlinSamp1} is set to be the identity. This algorithm is based on applying techniques for identifying the frequencies of complex exponentials, such as the matrix pencil \cite{hua90} or annihilating filter \cite{vetterli2002} methods.

\begin{figure}[t]\centering
\includegraphics[scale=0.85]{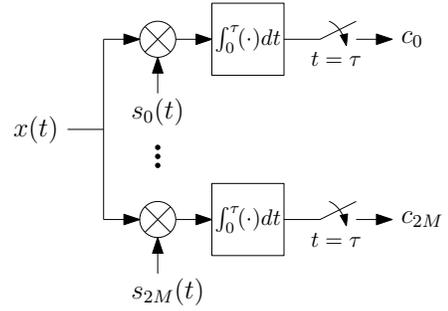}
\caption{Linear multichannel sampling.}
\label{fig:MultichannelSampling}
\end{figure}

If we restrict attention to real sampling functions, then the minimal number $N$ of samples supported by the method of \cite{gedalyahu2011} is $2M+1$. This is achieved by choosing\footnote{For notational convenience the samples are indexed as $c_0,c_1,\ldots$ in this example rather than $c_1,c_2,\ldots$.}
\begin{align}\label{eq:Sexponentials}
s_n(t) =
\begin{cases}
1 & n=0, \\
\cos(2\pi n t/\tau) & 1\leq n\leq M, \\
\sin(2\pi n t/\tau) & M+1\leq n\leq 2M.
\end{cases}
\end{align}
Due to the very small over-sampling factor, only the annihilating filter method is applicable within the approach of \cite{gedalyahu2011}.

Our approach can operate with a budget of only $2M$ samples and with arbitrary sampling kernels, as long as \eqref{eq:s_stability} is satisfied. Nevertheless, we now wish to demonstrate that our method is advantageous over that of \cite{gedalyahu2011} even in settings in which the sampling kernels are chosen as \eqref{eq:Sexponentials}.


We note that the convergence guarantees we provided in previous sections do not hold when sampling above the rate of innovation. However, in practice, the algorithm performs well also in mild over-sampling scenarios, such as the one treated here.

To compare between iterative recovery and the algorithm of \cite{gedalyahu2011}, we concentrated on signals with period $\tau=1$ comprising $M=2$ pulses and thus used $N=2M+1=5$ samples to recover them. We chose a pulse with Fourier coefficients $\tilde{g}_k=1/(5+n^2)$, which, as shown in Fig.~\ref{fig:MSEvsSNR_TomerKifr_OneOverN}\subref{fig:PulsesShape_OneOverN} is very wide in the time domain. This renders the determination of pulse positions a challenging task. The constraints \eqref{eq:ConstrTD} were the same as in Section~\ref{sec:GaussiansNonlin}. The true time delays were $t_1 = 1/\sqrt{15}\approx0.2582$ and $t_2 = 1/\sqrt{2}\approx 0.7071$ and the true amplitudes were randomly generated to yield $a_1\approx0.5285$ and $a_2\approx0.14$. The initialization of the algorithm was the same as in Section~\ref{sec:GaussiansNonlin}. At each iteration of the algorithm, the matrices $\bA$ and $\bB$ comprise the (weighted) Fourier coefficients of shifted versions of $g(t)$ and of $g'(t)$. These quantities can be obtained analytically from the Fourier coefficients of $g(t)$.

In Fig.~\ref{fig:MSEvsSNR_TomerKifr_OneOverN}\subref{fig:MSEvsSNR_TomerKifr} the performance of both approaches is compared against the CRB when the samples are contaminated by white Gaussian noise. As can be seen, the quasi-Newton method outperforms the annihilating-filter-based algorithm at all SNR.

\begin{figure}[t]\centering
\subfloat[]{\includegraphics[scale=0.4, trim=0cm 0.1cm 0cm 0.1cm, clip] {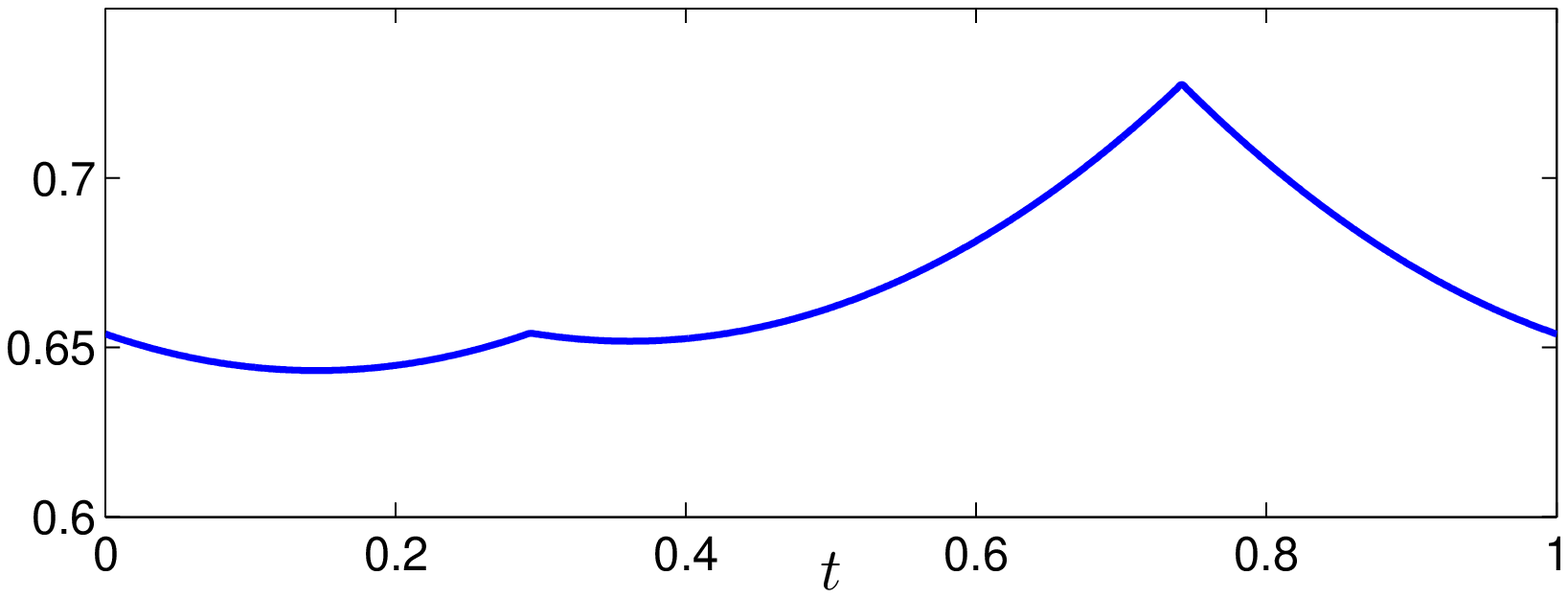}\label{fig:PulsesShape_OneOverN}}\\
\subfloat[]{\includegraphics[scale=0.4, trim=0cm 0.1cm 0cm 0.1cm, clip] {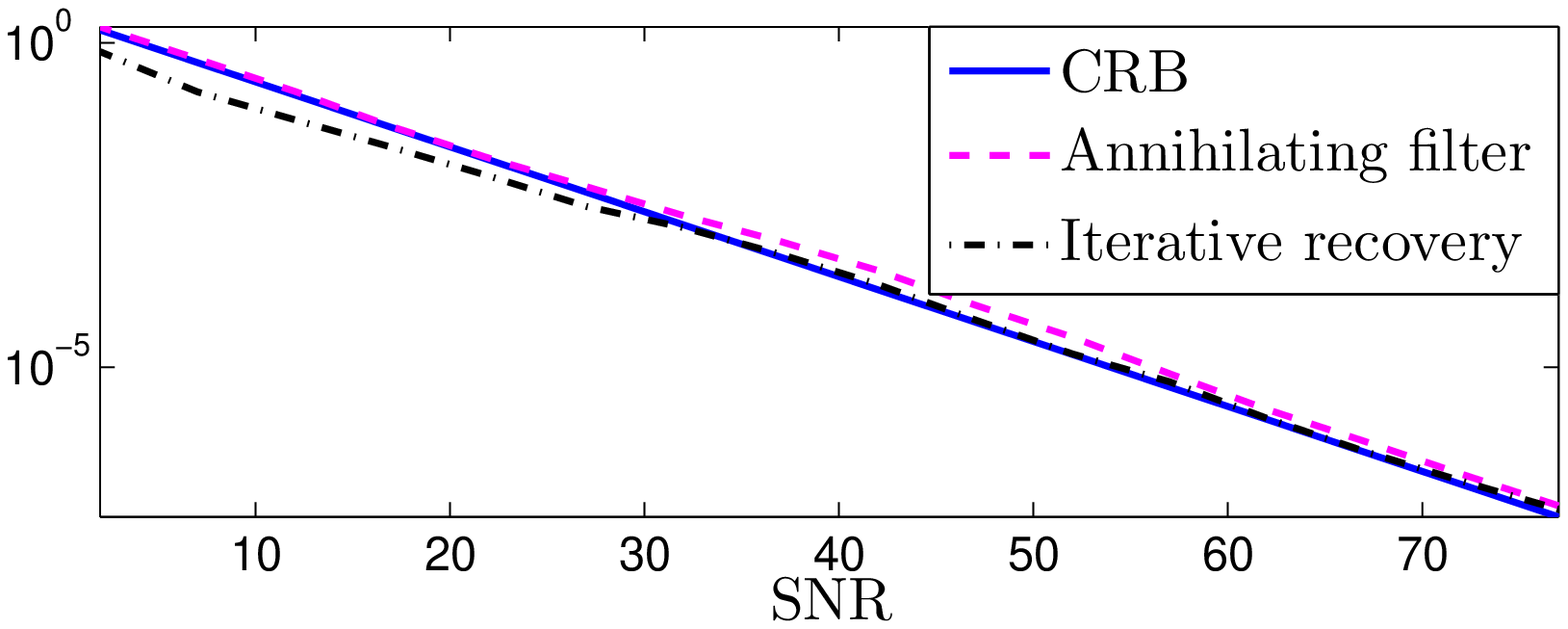}\label{fig:MSEvsSNR_TomerKifr}}
\caption{(a)~One period of the periodic signal $x(t)$ comprising two wide pulses. (b)~MSE as a function of SNR for recovery with the $N=5$ samples corresponding to \eqref{eq:Sexponentials}. The dashed and dash-dotted lines correspond, respectively, to the method of \cite{gedalyahu2011} and quasi-Newton iterations. The solid line corresponds to the CRB for estimating $x(t)$ from the samples.}
\label{fig:MSEvsSNR_TomerKifr_OneOverN}
\end{figure}

\subsection{Stability}
\label{sec:stability}
Although time-delay estimation is a long-studied problem, stability was not given much attention in past works. In \cite{blumensath2011} an example was presented in which the delay $t_1$ of a rectangular pulse $g(t-t_1)$ can be determined from uniformly-spaced samples taken at the output of a triangular impulse-response filter, but this cannot be achieved in a stable manner. For a general channel-sounding setting, it is not trivial to obtain simple-to-verify conditions on the pulse shape $g(t)$, sampling functions $\{s_n(t)\}$, nonlinearity $f(\cdot)$ and the parameters $T_{\min}$, $T_{\max}$, $t_0$ and $a_0$ such that stable recovery is guaranteed. However, as we now demonstrate, unstable settings can be identified numerically using the proposed approach.

Assume that $N=2M$ samples are obtained with a monotonic nonlinearity $f(\cdot)$ and a set $\{s_n\}_{n=1}^{2M}$ of linearly independent sampling kernels. In this case, condition \eqref{eq:empthyNulspace} is satisfied. Assume further that for a certain parameter vector $\bth_0=\begin{pmatrix}t_1&\cdots&t_M&a_1&\cdots&a_M\end{pmatrix}$ and certain initial guess $\bth^0$, the algorithm terminates at a point $\bth_1$ for which $\varepsilon(\bth_1)\neq 0$. This means that $\bth_0$ is not the unique stationary point of the LS objective so that, according to Theorem~\ref{thm:UniqeStationaryPoint}, stable recovery is not possible in this setting for all $\bth$ in the constraint set $\AAA$. More specifically, either condition \eqref{eq:h_stability} or \eqref{eq:s_stability} (or both) are violated for some $\bth\in\AAA$.

In fact, the point at which \eqref{eq:h_stability} or \eqref{eq:s_stability} are violated is no other than $\bth_1$. Indeed, the fact that $\nabla\varepsilon(\bth_1)=0$ and $\varepsilon(\bth_1)\neq0$ implies that $(\partial\hat{\bc}/\partial\bth)|_{\bth_1}=0$ (see \eqref{eq:gradC} and \eqref{eq:ls}). Therefore, by the definition of the Fr\'{e}chet derivative,
\begin{align}
0 &= \lim_{\bdelta\rightarrow 0} \frac{\left\|\hat{\bc}(\bth_1+\bdelta)-\hat{\bc}(\bth_1)\right\|_{\RN^{2M}}}{\|\bdelta\|_{\RN^{2M}}} \nonumber\\
&= \lim_{\bdelta\rightarrow 0} \frac{\left\|S(h(\bth_1+\bdelta))-S(h(\bth_1))\right\|_{\RN^{2M}}}{\|\bdelta\|_{\RN^{2M}}},
\end{align}
contradicting the requirements \eqref{eq:h_stability} and \eqref{eq:s_stability} that
\begin{align}
\left\|S(h(\bth_1+\bdelta))-S(h(\bth_1))\right\|_{\RN^{2M}}\geq\alpha_s\alpha_h\|\bdelta\|_{\RN^{2M}}.
\end{align}
This can also be seen from an estimation viewpoint. Namely, suppose that the samples $\bc$ are perturbed by white Gaussian noise with variance $\sigma^2$. Then the unbiased CRB for estimating $\bth$ from these noisy measurements is given at $\bth=\bth_1$ by \cite{ben-haim2011}
\begin{equation}
\sigma^2\left(\left(\left.\pd{\hat{\bc}}{\bth}\right|_{\bth_1}\right)^*\left(\left.\pd{\hat{\bc}}{\bth}\right|_{\bth_1}\right)\right)^{-1}.
\end{equation}
If $(\partial\hat{\bc}/\partial\bth)|_{\bth_1}=0$ then there exists no unbiased technique that can recover the parameters with a finite MSE.

As a demonstration of the utilization of this approach, consider again the setting of Section~\ref{sec:KfirRonen}. As mentioned above, existing techniques that do not involve discretization can only handle the case in which the frequencies of the sampling kernels are consecutive. An interesting question is whether there is a potential gain in using non-consecutive indices. To study this setting, we used our algorithm to recover two time delays, where $g(t)$ was taken to be a pulse whose Fourier coefficients are equal~$1$ up to some large index and~$0$ otherwise. We used four sinusoidal sampling functions (two sines and two cosines) with frequencies $1$ and $3$. While the true parameters were $\begin{pmatrix}t_1 & t_2 & a_1 & a_2\end{pmatrix}=\begin{pmatrix}0.2 & 0.8 & 1 & 5\end{pmatrix}$, the algorithm converged to the point $\begin{pmatrix}0.34 & 0.85 & 0.41 & 3.1\end{pmatrix}$. This means that the CRB for estimating $\bth$ explodes at this point. Figure~\ref{fig:CRBvst2} depicts the CRB as function of $t_2\in[0.85,1]$ for $t_1=0.34$, verifying that this is indeed the case. We therefore conclude that in this setting there exist parameter values that \emph{cannot be recovered stably by any technique}.

\begin{figure}[t]\centering
\includegraphics[scale=0.4]{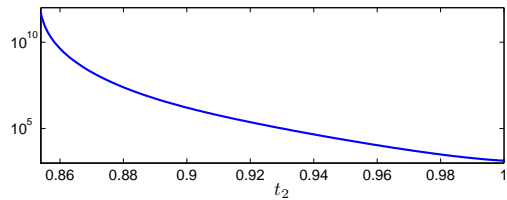}
\caption{CRB versus $t_2$ for fixed $t_1$ in a setting with sinusoidal sampling kernels with nonconsecutive frequencies.}\label{fig:CRBvst2}
\end{figure}

A word of caution is in place, though. For our approach to be able to recover a parameter vector $\bth_0$, we need that \emph{every} $\bth\in\AAA$ can be stably recovered and not only $\bth_0$ itself. Therefore, the fact that in some settings with nonconsecutive sampling frequencies there exist unstable points in $\AAA$ limits the applicability of our method in those scenarios. It may thus be of interest in certain applications to pursue methods that can recover any stably-reconstructible $\bth_0$, regardless if there exist other points in $\AAA$ at which the CRB is infinite.

\section{Application to CPM Communication}
\label{sec:ApplicationFSK}

As mentioned in Section~\ref{sec:ProblemSetting}, an important application area not treated in the FRI literature is CPM communication (see \eqref{eq:CPM}). For a general rational modulation index $h$ and pulse width $L$, optimum coherent detection can be performed by means of the Viterbi algorithm. A major limitation with this approach, though, is that it requires sampling at a rate of $1/T$ at the output of $4Q^L$ linear filters \cite{sundberg86}. This corresponds to an over-sampling factor of $4Q^L$ beyond the rate of innovation. Furthermore, for $h=2k/p$, where $k$ and $p$ have no common factors, the number of states in the Viterbi decoder is $pQ^{L-1}$. Here we propose a sub-optimal alternative, which employs an average sampling rate of only $2/T$, as depicted in Fig.~\ref{fig:FSKreceiver}. Our approach consists in treating the data symbols $\{a_m\}$ in \eqref{eq:CPM} as continuous-valued and quantizing the resulting recoveries to the nearest element in the set $\{\pm1,\ldots,\pm(Q-1)\}$. We emphasize that our proposed approach does not perform well in noise and serves here merely as a demonstration of treatment of non union-of-subspace models.

\begin{figure}[t]\centering
\includegraphics[scale=0.85]{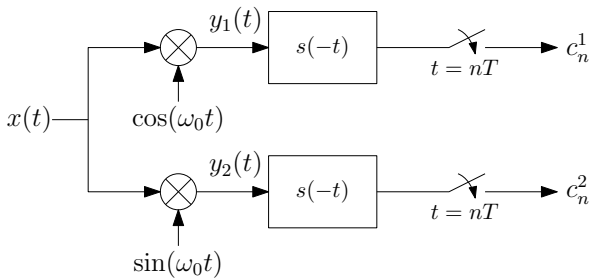}
\caption{Proposed CPM receiver.}
\label{fig:FSKreceiver}
\end{figure}

In principle, cleverly designed measurements at a rate of $1/T$ should suffice (in the noiseless setting) for perfect recovery. However, as we will see, neither of the branches of Fig.~\ref{fig:FSKreceiver} suffices by itself for recovery of all symbols with our iterative approach. Instead, we will alternately use bunches of samples from each of the branches. The signals $y_1(t)$ and $y_2(t)$ contain one replica of the frequency content of $x(t)$ around $\omega=0$ and one around $2\omega_0$. Suppose for the moment that the filter $s(-t)$ suppresses the replica around $2\omega_0$ so that, to high precision, for $i=1,2$,
\begin{align}
y_i(t) &= f_i\left(\sum_{m}\tilde{a}_m \tilde{g}(t-mT)\right),
\end{align}
where we adopted the representation \eqref{eq:FSK} and denoted $f_1(\alpha)=0.5\cos(\alpha)$ and $f_2(\alpha)=-0.5\sin(\alpha)$.
Thus, for $i=1,2$,
\begin{equation}\label{eq:sampCPM}
c^{i}_{n} = \int_{-\infty}^\infty s(t-nT) f_i\left(\sum_{m\in\ZN}\tilde{a}_m\tilde{g}(t-nT)\right) dt.
\end{equation}

Linear sampling of a SI signal passing through memoryless nonlinearity, as in \eqref{eq:sampCPM}, was studied in \cite{dvorkind2008,faktor10}. In particular, it was shown that if the nonlinearity is a monotone function that does not vary too rapidly, then a stationary point of the LS objective is necessarily a global minimum. In our setting, neither $f_1(\alpha)$ nor $f_2(\alpha)$ are monotone functions. However, since $\tilde{a}_m$ can only vary by $\pm1$ from one symbol to the next, the phase
\begin{equation}
\varphi(t) = \sum_{m}\tilde{a}_m \tilde{g}(t-mT).
\end{equation}
is guaranteed to vary by less than $\pm\pi/2$ over short enough time segments. Specifically, $f_i(\varphi(t))$ is a monotone function of $\varphi(t)$ over a certain time interval if
\begin{equation}\label{eq:phaseBounds1}
(i-1)\pi/2+ 2\pi p<\varphi(t)<(i+1)\pi/2+ 2\pi p
\end{equation}
or
\begin{equation}\label{eq:phaseBounds2}
(i+1)\pi/2+ 2\pi p<\varphi(t)<(i+3)\pi/2+ 2\pi p
\end{equation}
for some $p\in\ZN$ throughout this period. For such a segment $[t_1,t_2]$ and assuming that the support of $s(t)$ is contained in $[t_a,t_b]$, all samples $c^i_n$ with indices $(t_1-t_a)/T<n<(t_2-t_b)/T$ conform to the model in \cite{dvorkind2008,faktor10}. These samples can be used to recover a corresponding set of symbols.

To summarize, our approach for the simple setting in which $s(t)$ is supported on $[0,T]$ is as follows. Suppose that all symbols up to index $n_1$ were recovered. These allow to determine $\varphi(n_1T)$, which is used to decide, according to \eqref{eq:phaseBounds1} and \eqref{eq:phaseBounds2}, weather the next batch of samples is to be taken from the first branch or from the second one. Next, the maximal index $n_{\max}$ such that the phase remains within the corresponding interval for every $t\in[(n_1+1)T,n_{\max}T]$ is determined\footnote{This can be done by noting that the change in phase for $t\geq(n_1+1)T$ is due both to the contribution of the known symbols $\{a_m\}_{m\leq n_1}$ and to the symbols $\{a_m\}_{m> n_1}$, which are yet to be recovered. The largest change occurs if the latter are all $+1$ or $-1$.}. The samples with indices $n_1+1,\ldots,n_{\max}-1$ are then used to recover the symbols with the corresponding indices. This process is repeated sequentially.

The $n$th sample in the $i$th channel is given by $c^{i}_n = \langle y_i, s_n \rangle$, where $s_n(t) = s(t-nT)$. Assume, without loss of generality, that $\bth=\begin{pmatrix}a_1,\ldots,a_M\end{pmatrix}$.
Direct computation shows that
\begin{align}
\pd{\hat{\bc}^i}{\bth} =
\begin{pmatrix}
\langle z^i_1, s_1\rangle & \cdots & \langle z^i_M, s_1\rangle \\
\vdots & & \vdots\\
\langle z^i_1, s_N\rangle & \cdots & \langle z^i_M, s_N\rangle
\end{pmatrix},
\end{align}
where
\begin{align}
z^1_m(t) &= \frac{1}{2}q(t-mT)\left(\cos\left(2\omega_0 t + \varphi(t)\right)+\cos\left(\varphi(t)\right)\right),\nonumber\\
z^2_m(t) &= \frac{1}{2}q(t-mT)\left(\sin\left(2\omega_0 t + \varphi(t)\right)- \sin\left(\varphi(t)\right)\right),
\end{align}
and we denoted $q(t) = \int_{-\infty}^t g(\tau)d\tau$. To account for the fact that $|a_m|\leq Q-1$, we chose to enforce the constraint $|a_m|<Q$ by using the parametrization $\tilde{\theta}_m = \tan(\pi a_m/(2Q))$.
The derivative of the corresponding transformation $\bth=p(\tilde{\bth})$ is $\partial p/ \partial\tilde{\bth} = (2Q/\pi){\rm diag}
\begin{pmatrix}
1/(1+\tilde{\theta}_1^2) & \cdots & 1/(1+\tilde{\theta}_M^2)
\end{pmatrix}$.

Figure~\ref{fig:CPMmodulation} shows the phase of a typical binary CPM signal (namely, with $Q=1$) with modulation index $h=1/7$ and with the 5REC pulse $g(t)={\rm rect}_{[0,5T]}(t)$. Figure~\ref{fig:CPMmodulation}\subref{fig:symbolsRec} shows the recovery of the symbols with only $2$ iterations per batch of samples. Here the sampling kernels were taken as $s(t) = {\rm rect}_{[0,T]}(t)$. The batches of samples on which the algorithm operated are marked with dashed vertical lines. As can be seen, even with two iterations, the original symbols can be recovered by quantization of the recovered symbols.

\begin{figure}[t]\centering
\subfloat[]{\includegraphics[scale=0.4, trim=0cm 0.1cm 0cm 0.1cm, clip]{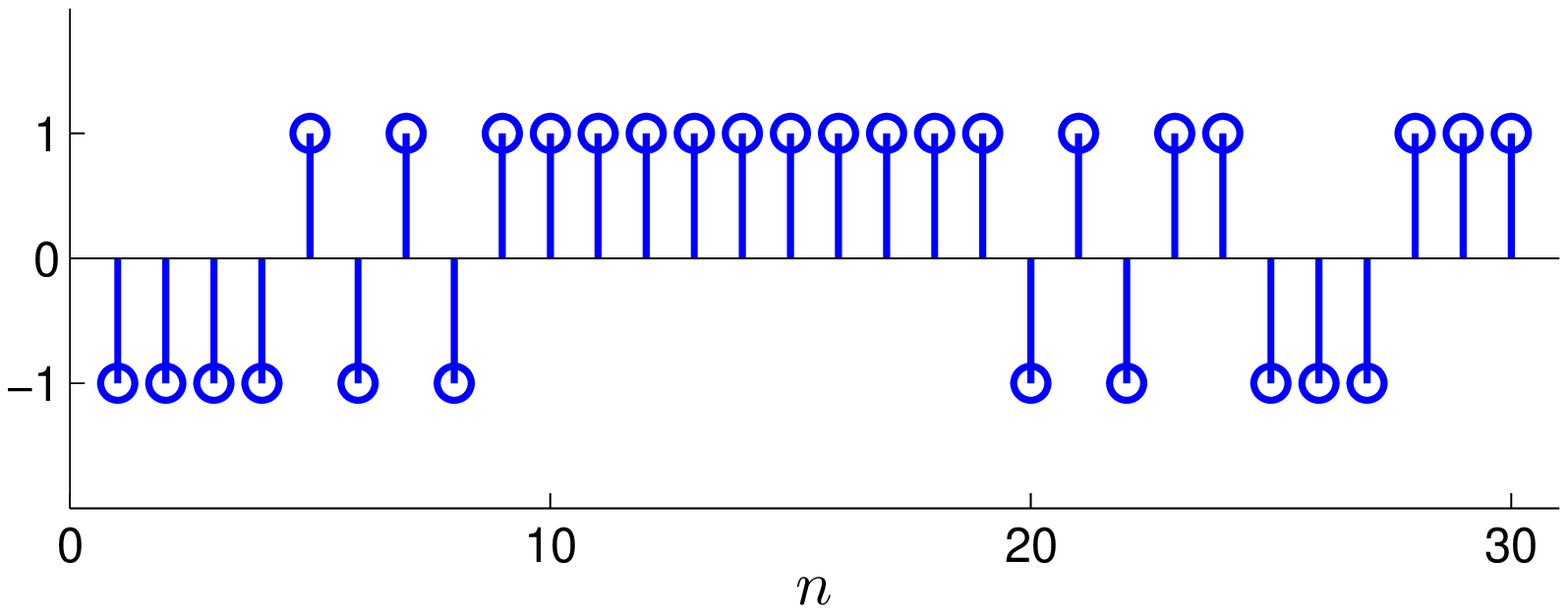}\label{fig:symbols}}\\
\subfloat[]{\includegraphics[scale=0.4, trim=0cm 0.1cm 0cm 0.1cm, clip]{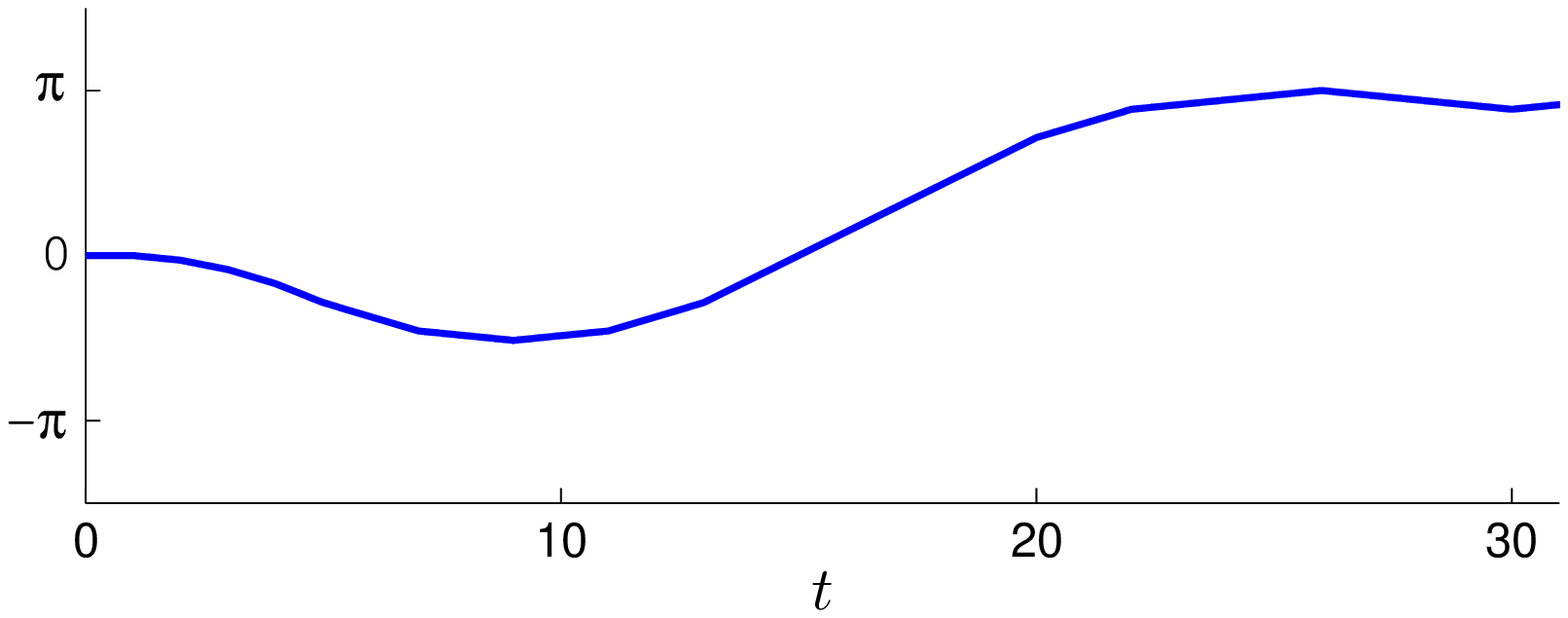}\label{fig:phase}}\\
\subfloat[]{\includegraphics[scale=0.4, trim=0cm 0.1cm 0cm 0.1cm, clip]{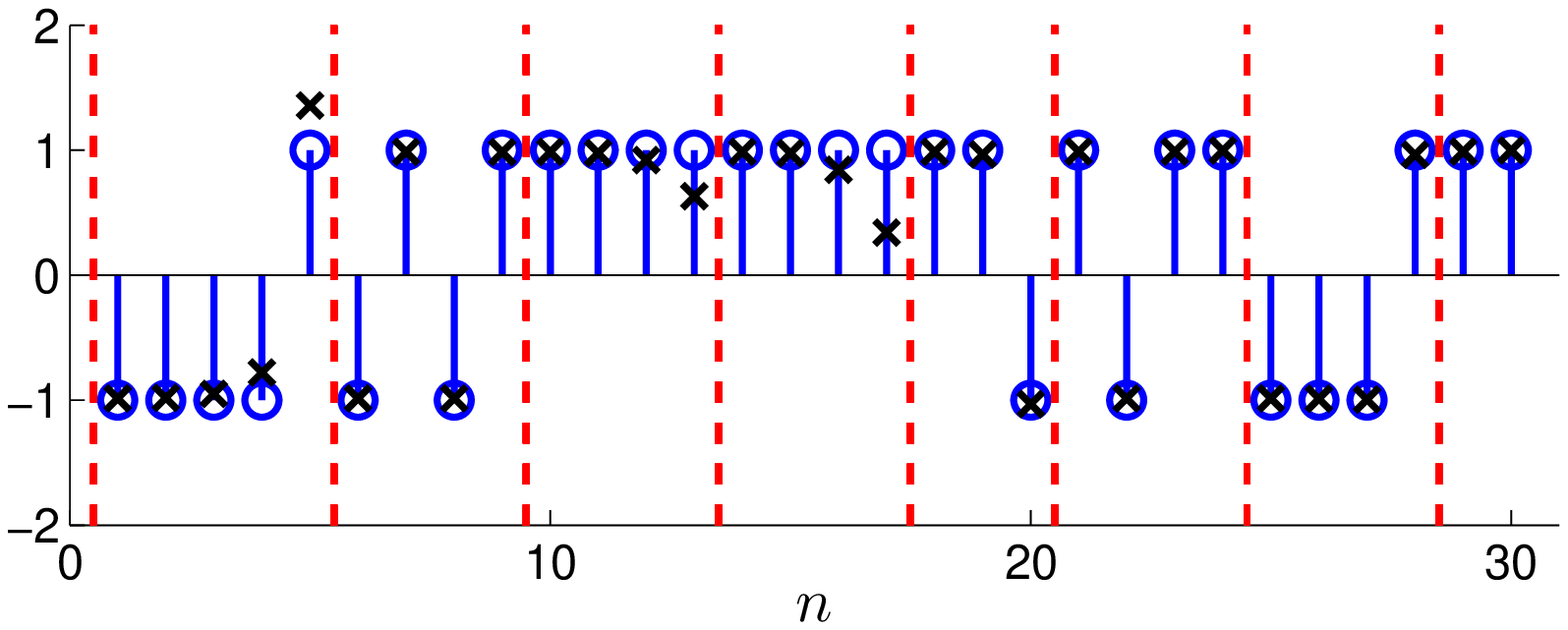}\label{fig:symbolsRec}}
\caption{Binary 5REC CPM modulation with index $h=1/7$. (a)~Symbols $a_m$. (b)~Corresponding phase $\varphi(t)$. (c)~The 'x'-marks denote the recovered coefficients using $2$ quasi-Newton iterations (before quantization).}
\label{fig:CPMmodulation}
\end{figure}

\section{Conclusion}

In this paper, we studied recovery of the parameters defining an FRI signal from samples taken at the rate of innovation. We showed that in any situation in which the parameters can be stably recovered, this can be achieved by a general-purpose unconstrained optimization method. Our approach thus provides a simple means for treating a wide range of FRI signal classes and sampling methods. We demonstrated the usefulness of our strategy in reconstructing finite and periodic pulse streams from nonlinear and nonideal samples as well as in decoding CPM modulated messages. We also showed that our method is often advantageous in noisy settings.

\appendices
\section{Convergence of Quasi-Newton Iterations}
\label{sec:ConvNewton}
Letting $\bQ=\partial{\hat{\bc}}/\partial{\bth}|_{\bth^\ell}$ and substituting $\bB^\ell=(\bQ^*\bQ)^{-1}$ and \eqref{eq:gradC}, the left-hand side of \eqref{eq:cos} becomes
\begin{align}
&\frac{(\hat{\bc}(\bth^\ell)-\bc)^*\bQ(\bQ^*\bQ)^{-1}\bQ^*(\hat{\bc}(\bth^\ell)-\bc)} {\left\|(\bQ^*\bQ)^{-1}\bQ^*(\hat{\bc}(\bth^\ell)-\bc)\right\|\left\|\bQ^*(\hat{\bc}(\bth^\ell)-\bc)\right\|} \nonumber\\
&\hspace{2cm}=\frac{\left\|\hat{\bc}(\bth^\ell)-\bc\right\|^2} {\left\|\bQ^{-1}(\hat{\bc}(\bth^\ell)-\bc)\right\|\left\|\bQ^*(\hat{\bc}(\bth^\ell)-\bc)\right\|} \nonumber\\
&\hspace{2cm}\geq\frac{1}{\left\|\bQ^{-1}\right\|\left\|\bQ\right\|}.
\end{align}
Here, we used the fact that $\Ra{\bQ}=\RN^K$, which was established in the proof of Theorem~\ref{thm:UniqeStationaryPoint}, so that $\bQ(\bQ^*\bQ)^{-1}\bQ^*=\bI$. Now, the right-hand side of \eqref{eq:s_stability}, together with \eqref{eq:beta_h}, imply that $\|\bQ\|\leq\beta_s\beta_h$. Similarly, the left-hand side of \eqref{eq:s_stability}, together with \eqref{eq:h_stability}, imply that $\|\bQ^{-1}\|\leq1/(\alpha_s\alpha_h)$. Therefore,
\begin{equation}
\frac{1}{\left\|\bQ^{-1}\right\|\left\|\bQ\right\|} \geq \frac{\beta_s\beta_h}{\alpha_h\alpha_s},
\end{equation}
so that \eqref{eq:cos} is satisfied with any $\delta<(\beta_s\beta_h)/(\alpha_h\alpha_s)$.

\section{Proof of Gradient Lipschitz Continuity}
\label{sec:GradLip}
Denoting $\be(\bth)=\hat{\bc}(\bth)-\bc$, we have
\begin{align}\label{eq:proogGradLip1}
&\|\nabla\varepsilon(\bth_1)-\nabla\varepsilon(\bth_2)\| = \nonumber\\
&\hspace{0.7cm}=\left\|\left(\left.\pd{\hat{\bc}}{\bth}\right|_{\bth_1}\right)^* \be(\bth_1) - \left(\left.\pd{\hat{\bc}}{\bth}\right|_{\bth_2}\right)^* \be(\bth_2)\right\| \nonumber\\
&\hspace{0.7cm}=\frac{1}{2}\left\|
\left(\left.\pd{\hat{\bc}}{\bth}\right|_{\bth_1}-\left.\pd{\hat{\bc}}{\bth}\right|_{\bth_2}\right)^*(\be(\bth_1)+\be(\bth_2))\right.\nonumber\\
&\hspace{1.7cm}\left.+\left(\left.\pd{\hat{\bc}}{\bth}\right|_{\bth_1}+\left.\pd{\hat{\bc}}{\bth}\right|_{\bth_2}\right)^*(\be(\bth_1)-\be(\bth_2))\right\|
\nonumber\\
&\hspace{0.7cm}\leq\frac{1}{2}\left\|
\left.\pd{\hat{\bc}}{\bth}\right|_{\bth_1}-\left.\pd{\hat{\bc}}{\bth}\right|_{\bth_2}\right\|\|\be(\bth_1)+\be(\bth_2)\|\nonumber\\
&\hspace{1.7cm}+\frac{1}{2}\left\|
\left.\pd{\hat{\bc}}{\bth}\right|_{\bth_1}+\left.\pd{\hat{\bc}}{\bth}\right|_{\bth_2}\right\|\|\be(\bth_1)-\be(\bth_2)\|.
\end{align}
Assuming that $\bth_1,\bth_2\in\mathcal{N}$, conditions \eqref{eq:s_stability} and \eqref{eq:hprime_stability} imply that
\begin{align}\label{eq:proogGradLip2}
\left\|\left.\pd{\hat{\bc}}{\bth}\right|_{\bth_1}-\left.\pd{\hat{\bc}}{\bth}\right|_{\bth_2}\right\| &\leq
\beta_s \left\|\left.\pd{h}{\bth}\right|_{\bth_1}-\left.\pd{h}{\bth}\right|_{\bth_2}\right\| \nonumber\\
&\leq \beta_s\beta_{h'}\|\bth_1-\bth_2\|. 
\end{align}
Furthermore, \eqref{eq:s_stability} implies that
\begin{align}\label{eq:proogGradLip3}
\left\|\left.\pd{\hat{\bc}}{\bth}\right|_{\bth_1}+\left.\pd{\hat{\bc}}{\bth}\right|_{\bth_2}\right\| \leq
\left\|\left.\pd{\hat{\bc}}{\bth}\right|_{\bth_1}\right\|+\left\|\left.\pd{\hat{\bc}}{\bth}\right|_{\bth_2}\right\| \leq 2\beta_s. \end{align}
Since $\bth_1,\bth_2\in\mathcal{N}$, it also follows that
\begin{equation}\label{eq:proogGradLip4}
\|\be(\bth_1)+\be(\bth_2)\|\leq\|\be(\bth_1)\|+\|\be(\bth_2)\|\leq 2\varepsilon(\bth^0).
\end{equation}
Finally,
\begin{equation}\label{eq:proogGradLip5}
\|\be(\bth_1)-\be(\bth_2)\|=\|\hat{\bc}(\bth_1)-\hat{\bc}(\bth_2)\|\leq \beta_s\|\bth_1-\bth_2\|.
\end{equation}
Substituting \eqref{eq:proogGradLip2}, \eqref{eq:proogGradLip3}, \eqref{eq:proogGradLip4} and \eqref{eq:proogGradLip5} into \eqref{eq:proogGradLip1} yields
\begin{align}
\|\nabla\varepsilon(\bth_1)-\nabla\varepsilon(\bth_2)\| \leq \left(\beta_s\beta_{h'}\varepsilon(\bth^0)+\beta_s^2\right)\|\bth_1-\bth_2\|,
\end{align}
which proves that $\nabla\varepsilon(\bth_1)$ is Lipschitz continuous over $\mathcal{N}$ with Lipschitz bound $\beta_s(\beta_{h'}\varepsilon(\bth^0)+\beta_s)$.

\bibliographystyle{IEEEtran}

\end{document}